\begin{document}
\title{Quantum information recast via multiresolution  in $L_2(0,1]$}
\author{Mandana Bidarvand \& Artur Sowa \\
Department of Mathematics and Statistics\\
University of Saskatchewan \\
106 Wiggins Road,
Saskatoon, SK S7N 5E6 \\
Canada  }

\date{}
\maketitle
\newtheorem{definition}{Definition}[section]
\newtheorem{theorem}{Theorem}[section]
\newtheorem{proposition}{Proposition}[section]
\newtheorem{lemma}{Lemma}[section]
\newtheorem{remark}[theorem]{Remark}
\newtheorem{corollary}{Corollary}[section]
\newtheorem{algorithm}{Algorithm}[section]

\begin{center}

 Abstract

 \end{center}
We present a multiresolution approach to the theory of quantum information. It arose from an effort to develop a systematic mathematical approach to the analysis of an infinite array of qubits, i.e., a structure that may be interpreted as a quantum metamaterial. Foundational to our approach are two mathematical constructions with classical roots: the Borel isomorphism and the Haar basis. Here, these constructions are intertwined to establish an identification between $L_2(0,1]$ and the Hilbert space of an infinite array of qubits and to enable analysis of operators that act on arrays of qubits (either finite or infinite). The fusion of these two concepts empowers us to represent quantum operations and observables through geometric operators. As an unexpected upshot, we observe that the fundamental concept of calculus is inherent in an infinite array of qubits; indeed, the antiderivative arises as a natural and indispensable operator in this context. 


 \section{Introduction} \label{Introduction}
 
A quantum metamaterial is an engineered structure whose physical properties and modes of interaction with the environment depend on its quantum state. A structure consisting of an array of qubits that is enabled to interact with the electromagnetic field is an example of such a material, and has been initially discussed in physics literature, e.g., \cite{Zagoskin2}, \cite{Zagoskin3}, \cite{Rakhmanov}. A mathematical approach has been subsequently proposed in \cite{R1}. The underlying physical model is the generalized Jaynes-Cummings model where a mode of light is interacting with an array of $n$ qubits. The central part of the theory is the interaction Hamiltonian:
\begin{equation*}
\mathcal{H}_I = \hbar\, C_x^{(n)}\otimes (\hat{a} + \hat{a}^\dagger)
\end{equation*}
where of course, $\hat{a}$ and $\hat{a}^\dagger$ are the annihilation and creation operators of the harmonic oscillator, while
\begin{equation}
C_x^{(n)} = \sum_{k = 1}^{n}\,\lambda_k\, \sigma_x^{k},\quad \sigma_x^{k} = I\otimes \ldots I\otimes \sigma_x \otimes I \ldots
\,\otimes I \quad (n \mbox{ factors with } \sigma_x \mbox{ in k-th place}). \label{paulix}
\end{equation}
 The spectrum of the $2^n$-by-$2^n$ matrix $C_x^{(n)}$ can be described by a closed-form formula. Indeed, note that the Pauli matrix $\sigma_x$, defined in (\eqref{paulix}), is diagonalized via
\[
u \, \sigma_x u^\dagger = \sigma_z, \quad \mbox{ where } u = \frac{1}{\sqrt{2}}\left[\begin{array}{rr}
                        1 & 1 \\
                        1 & -1 
                      \end{array}\right].
\]
Thus,  $ u^{\otimes n}$ is a unitary matrix that diagonalizes $C_x^{(n)}$, namely
\begin{equation}\label{x-to-z}
     u^{\otimes n}\,C_x^{(n)}\, \left( u^{\otimes n}\right)^\dagger = C_z^{(n)} = \sum_{k = 1}^{n}\,\lambda_k\, \sigma_z^{k}. 
\end{equation}
The entries of the diagonal matrix $C_z^{(n)}$ are its eigenvalues and, of course the eigenvalues of $C_x^{(n)}$. They can be given explicitly via the formula
\begin{equation}\label{theEs}
  \vec{E} = \sum_{k = 1}^{n}\,\lambda_k\, R_k,\quad \mbox{ where } R_k = \left[\begin{array}{c}
                                                                                 1 \\
                                                                                 1 
                                                                               \end{array}\right]^{\otimes k-1} \otimes \left[
                                                                               \begin{array}{r}
                                                                                 1 \\
                                                                                 -1 
                                                                               \end{array}\right] \otimes \left[\begin{array}{c}
                                                                                       1 \\
                                                                                       1 
                                                                                     \end{array}\right]^{\otimes n-k}.
\end{equation}
Thus, each eigenvalue $\vec{E}(j), j = 1, 2, \ldots, 2^n$, is a linear combination of all parameters $\lambda_k$ with coefficients $\pm 1$. All possible selections of the sequences of $\pm 1$ are admissible, so there are $2^n$ eigenvalues some of which may coincide, depending on the values of $\lambda_k$.  
In other words the eigenvalues result from all combinations of the form
\begin{equation}\label{eigs_fin}
  \sum_{k=1}^{n} \epsilon_k \lambda_k, \quad \mbox{ where } \epsilon_k = \pm 1.
\end{equation}
It is natural to ask about the features of the eigenvectors of an operator such as $C_x^{(n)}$. The concept of scale is inherent and key in understanding the answer to this question. Specifically, we bring to bear the Haar transform, see e.g. \cite{Daubechies}. We summarize the construction of the Haar basis in Subsection \ref{sec_Haar_Borel}. The discrete version of the Haar transform is a finite matrix whose columns are discrete models of the Haar functions. By abuse of notation, we will denote it $\mathcal{T}_H$ and its size will be clear from the context. The basic observation is that 
\begin{equation}\label{Blocks}
 \mathcal{ T_H} \, C_x^{(n)} \, \mathcal{T_H}' \quad \mbox{ is a block matrix}.
\end{equation} 
The blocks lie along the diagonal and have dimensions $1$-by-$1$, another $1$-by-$1$, then $2$-by-$2$, $4$-by-$4$, $8$-by-$8$, etc.
Since the eigenvectors of $C_x^{(n)}$ correspond to the eigenvectors of consecutive blocks, they are characterized by scale. This is the phenomenon first addressed in \cite{R1}, and developed here. It is helpful to highlight the structure of blocks. We omit the details of the computation as the rigorous approach is developed in subsequent sections. Also, this discrete example is easy to reproduce via symbolic computation. Now, the first block in (\ref{Blocks}) is $1$-by$1$ with the entry
\[
\lambda_1 + \lambda_2 + \ldots \lambda_n.
\]
The corresponding eigenvector is a column of constants (the discrete version of the constant function). 
The second block is also $1$-by-$1$, as the corresponding eigenvector is the discrete version of the Haar function $H_{0,0}$, see Subsection \ref{sec_Haar_Borel}. The eigenvalue is
\[
-\lambda_1 + \lambda_2 + \ldots \lambda_n.
\]
Note that for the specific choice $\lambda_j = 2^{-j}$, the limit as $n\rightarrow\infty$ of this expression is zero. When $n$ is finite but large, this eigenvalue is close to $0$, but not exactly zero.
The third block is $2$-by-$2$; it has the structure
\[
\left(
  \begin{array}{cc}
    0 & \lambda_1 \\
    \lambda_1 & 0 \\
  \end{array}
\right) + (-\lambda_2 + \lambda_3 + \ldots + \lambda_n)\, I.
\]
(Throughout the article, we use $I$ to denote the identity matrix whose size is clear from the context.)
The corresponding eigenvectors are linear combinations of the (discrete versions of the) basis functions $H_{1,0}, H_{1,1}$.  Again, the diagonal part turns to zero for the specially scaled infinite array, while when $n$ is finite but large it will be close to zero. In the latter case, the eigenvalues of this block are close to those of the non-diagonal component, i.e.
\[
\pm \lambda_1
\]
The fourth block is $4$-by-$4$; it has the structure
\[
\left(
  \begin{array}{cccc}
    0 & \lambda_2 & \lambda_1 & 0 \\
    \lambda_2 & 0 & 0 & \lambda_1 \\
    \lambda_1 & 0 & 0 & \lambda_2 \\
    0 & \lambda_1 & \lambda_2 & 0 \\
  \end{array}
\right)
+ (-\lambda_3 + \lambda_4 + \ldots + \lambda_n)\, I.
\]
The corresponding eigenvectors are linear combinations of the (discrete) basis functions $H_{2,0}$, $H_{2,1}$, $H_{2,2}$, $H_{2,3}$.  As before, the diagonal part turns to zero for the specially scaled infinite array. Again, when $n$ is finite but large this may still be close to zero. The overall eigenvalues are close to those of the non-diagonal component, i.e.
\[
\pm \lambda_1 \pm \lambda_2
\]
This establishes the pattern. The blocks double in size at every next step. The nondiagonal part depends on one more coefficient $\lambda_j$, and the diagonal part is the tail that may be small. 
   
Using this observation as the starting point, it is possible to make sense of operators, such as $C_x = C_x^{(\infty)}$, $C_y = C_y^{(\infty)}$, and $C_z = C_z^{(\infty)}$, i.e., operators corresponding to an infinite array of qubits, provided one chooses the sequence of coefficients $\lambda_k$ judiciously. Note that the entries of $   u^{\otimes n}$ have magnitude of $2^{-n/2}$, and so this sequence of matrices does not possess a meaningful limit as $n\mapsto \infty$. In particular, in contrast to relation (\ref{x-to-z}), it cannot be expected that the resulting limit operators will remain unitarily equivalent. The limiting operators $C_x$, $C_z$ have been introduced in \cite{R1}. The procedure that yields a meaningful limit requires an assumption that $\lambda_k$'s scale as $2^{-k}$. It was established that the spectrum of both operators is precisely the interval $[-1,1]$. However, $C_z$ was found to be a multiplier, whereas $C_x$ was found to be an operator with a complete system of eigenfunctions, corresponding to eigenvalues that densely fill the interval $[-1,1]$. The main tool in the description of the limit itself has been the Haar transform. 

In this work we further develop the approach taken in \cite{R1}, and establish a rigorous foundation for the limit procedure. In addition to the Haar transform this calls for an explicit use of the Borel isomorphism, which establishes an identification of the Hilbert space of an array of qubits with the space of square-integrable functions on an interval. This setting allows the discussion of the case of infinite as well as finite arrays.  In fact, the multi-scale approach proposed here furnishes an alternative framework for discussion of quantum information in general. While it is entirely equivalent to the canonical one, it is advantageous in the analysis of certain types of problems, such as these addressed here.

As is well known, general classical simulations of quantum structures are limited to small systems, which is due to exponential dependence of the dimension of the underlying Hilbert space on the number of qubits. Also, the conventional theoretical methods of quantum many-body theory frequently involve only nearly factorized quantum states, failing to account for the truly quantum effects. In contrast the promise of quantum engineering is precisely in taking advantage of quantum correlations inherent in non-factorized states, \cite{Zagoskin}. This underscores the value of quantum models that can be solved explicitly, as these successfully addressed here.

 While the Haar transform is classical, \cite{Haar}, and well known in harmonic analysis and the Borel isomorphism is well known in the dynamical system theory, \cite{Anosov_etal}, their application in quantum theory appears to be new with the only precedent being the aforementioned reference \cite{R1}. The main result of this work is Theorem \ref{P+theor}. However, the main gain is the development of a systematic multi-scale approach to the theory and modelling of arrays of qubits. The most surprising observation is the fact that the antiderivative operator occupies a central position. It appears not to have been spotted in the quantum information arena until now.

\section{The space of quantum states of an infinite array of qubits} \label{Section_C_Haar}

In this section we discuss the space of states of an infinite array of qubits. It requires the Haar basis with its inherent structure of multiresolution, see, e.g., \cite{Wojtaszczyk}. It also requires an identification of quantum states with the so-called scaling functions, see Fig. \ref{Borel_identity}. The fundamental idea is that since the Hilbert space of a single qubit is $\mathbb{H}_Q = \mbox{span}\{|0\rangle, |1\rangle\}$, and the Hilbert space of an $n$-tuple of qubits is $ \bigotimes_{j=1}^n \mathbb{H}_Q$, the state space of an infinite array should be $ \bigotimes_{j=1}^\infty \mathbb{H}_Q$. Furthermore, one can interpret  $\mathbb{H}_Q $ as the space of functions on a two element set, say, $\mathbb{Z}_2$. Therefore, the infinite tensor product of such spaces should be a space of functions on the infinite Cartesian product $\mathbb{Z}_2 \times \mathbb{Z}_2\times \mathbb{Z}_2 \times \ldots $. However, this set can be identified  with the interval $(0,1]$ via the binary expansion. These ideas are related to a certain approach to dynamics and chaos theory, first considered by \'{E}. Borel, see \cite{Anosov_etal}. 

\subsection{The Haar basis and the Borel isomorphism}\label{sec_Haar_Borel}
First, recollect the structure of the Haar basis in $L_2(0,1]$, see e.g. \cite{Daubechies, Haar, Wojtaszczyk}. Let $G(x) \equiv 1$ for $x\in(0,1]$ and $G(x)=0$ everywhere else on the real line. We will use notation
\[G_{n,k}(x) = 2^{n/2} G(2^nx-k);
\]
in particular $G_{0,0} = G$. Note that $G_{n,k}$ is supported in the dyadic interval
\[\mbox{supp } G_{n,k} = I_{n,k} = (2^{-n}k,\, 2^{-n}(k+1)].
\]
Furthermore, let $H(x) = [G_{1,0}(x) -  G_{1,1}(x)]/\sqrt{2}$, and
\begin{equation}\label{the_Haar_fn}
 H_{n,k}(x) = 2^{n/2} H(2^nx-k).
\end{equation}
Note that $H_{0,0} = H$. It follows directly from definitions that
\begin{equation}\label{Gs2haar}
  H_{n,k} = \frac{1}{\sqrt{2}}\, \left( G_{n+1, 2k} - G_{n+1, 2k +1} \right)
\end{equation}
as well as 
\begin{equation}\label{Gs2G}
  G_{n,k} = \frac{1}{\sqrt{2}}\, \left( G_{n+1, 2k} + G_{n+1, 2k +1} \right).
\end{equation}
The following fundamental facts are  well known:
\begin{enumerate}
  \item
 Denote $V_n =\mbox{ span } \{G_{n,k}: k = 0, 1, 2,  \ldots 2^n-1 \}$. Then, $V_0 \subseteq V_1 \subseteq V_2 \ldots$, and
  \begin{equation}\label{multires}
       L_2(0,1] = \bigcup_{n=0}^\infty V_n \quad (\mbox{multiresolution ladder}).
  \end{equation}
  \item
  For $n = 0, 1, \ldots $ let $W_n = V_{n+1}/V_n$. Then $W_n = \mbox{ span } \{H_{n,k}: k = 0, 1, 2,  \ldots 2^n-1 \}$, so that
  \begin{equation}\label{direct_sum}
  L_2(0,1] = V_0\oplus \bigoplus_{n=0}^\infty W_n \quad (\mbox{direct sum decomposition}).
  \end{equation}
\item
In what follows we will make use of the orthogonal projections $\Pi_n$.
  The orthogonal projection can be defined as a map denoted as $\Pi_n: V_{n+1} \rightarrow W_n$, where $V_{n+1}$ represents the vector space of dimension $n+1$ and $W_n$ represents the vector space of dimension $n$. This map takes a vector from $V_{n+1}$ and projects it onto the subspace $W_n$, which is spanned by the Haar basis vectors corresponding to dimension $n$. This projection process results in a vector that lies entirely within the subspace $W_n$. Therefore, given a vector $x\in V_{n+1}$, in which $x$ is the sum of components in $V_n$ and $W_n$ (i.e. $x = v+w$), the orthogonal projection can be defined as:
  \begin{equation}
  \Pi (x)= w,
 \end{equation}
 where $w$ is the component of x that lies in $W_n$.

    \item
It follows that the set of functions $\{G_{0,0}\}\cup \{H_{n,k}:  k = 0, 1, 2,  \ldots 2^n-1; n = 0,1,2, \ldots\}$ furnishes an orthonormal basis in $L_2(0,1]$ (the Haar basis). In all references to this basis we will assume the canonical order in $W_n$ to be according to increasing $k$, so that, overall, the order of basis functions is fixed to be:
\[
G_{0,0}, H_{0,0}, H_{1,0}, H_{1,1}, H_{2,0}, H_{2,1}, H_{2,2}, H_{2,3}, \ldots
\]
 We let $\mathcal{T_H}: L_2(0,1] \rightarrow \ell_2$ denote the Haar transform which assigns to a square integrable function, say, $f$ its ordered sequence of Haar coefficients:
 \[
c_0 =\int_{0}^{1} f(x)\, dx,  \mbox{ and }  c_{n,k} = \int_{0}^{1} f(x) H_{n,k}(x)\, dx.
\]
Clearly, $\mathcal{T_H}$ is a unitary transformation.
\end{enumerate}

Next, let $\mathbb{H}_Q = \mbox{span}\{|0\rangle, |1\rangle\}$ be the Hilbert space of the qubit. The canonical basis in $ \bigotimes\limits_{j=1}^n \mathbb{H}_Q$ is given by vectors of the form
\[
| \epsilon_1 \, \epsilon_2\, \ldots \epsilon_{n} \rangle \quad \mbox { where each } \epsilon_k \in\{0,1\}.
\]
When the number of qubits is fixed to be $n$, it is often convenient to use the shorthand notation
\begin{equation}\label{short}
|\epsilon_1 \, \epsilon_2\, \ldots \epsilon_{n}\rangle  = |k\rangle \mbox{ where } k = 2^n \sum_{j=1}^{n} \frac{\epsilon_j}{2^j}, \mbox{ so that } k \in\{ 1,2,\ldots 2^n-1\}.
\end{equation}
However, this is generally ambiguous when the number of qubits is not fixed.
There is a one-to-one correspondence between sequences of bits and dyadic intervals. Namely,
\[
[ \epsilon_1 \, \epsilon_2\, \ldots \epsilon_{n}] \leftrightarrow I_{n,k} \quad \mbox{ where } k = 2^n \sum_{j=1}^{n} \frac{\epsilon_j}{2^j}.
\]
In other words, the sequence of bits is interpreted as the address of its corresponding interval.
 Moreover, for $n\geq 1$ consider the map
\begin{equation}\label{Borel_n}
B_n: \, \bigotimes\limits_{j=1}^n \mathbb{H}_Q \rightarrow V_n,
\end{equation}
given by
\begin{equation}\label{Borel_expl}
 B_n \,| \epsilon_1 \, \epsilon_2\, \ldots \epsilon_{n} \rangle = G_{n,k},\quad k = 2^n \sum_{j=1}^{n} \frac{\epsilon_j}{2^j}.
\end{equation}
This identification of qubits with the functions $G_{n,k}$ is illustrated in Fig. \ref{Borel_identity}.

\begin{figure}[!ht]
\centering
  \includegraphics[width=60mm]{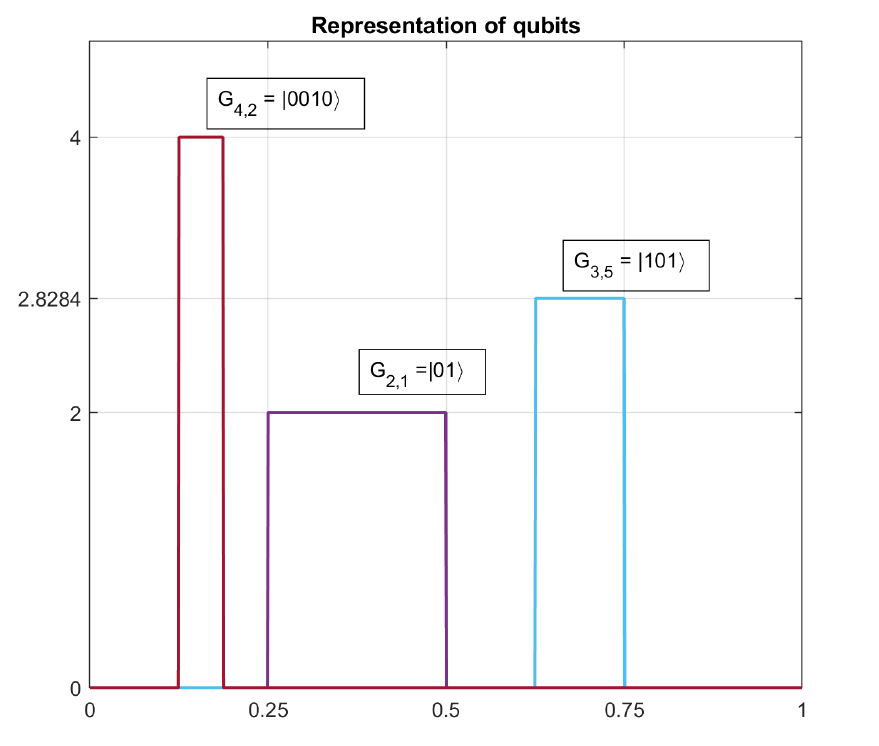}
\caption{\small Illustration of the identification given by \eqref{Borel_expl}, the symbol $B_n$ has been suppressed for clarity. }
\label{Borel_identity}
\end{figure}

When the number of qubits is fixed at $n$, this may be simplified to the useful form
\begin{equation}\label{Borel_short}
 B_n \,| k \rangle = G_{n,k} \quad \mbox{ for all }  k \in\{ 1,2,\ldots 2^n-1\}.
\end{equation}
Since $B_n$ takes one unitary basis to another, it is a unitary isomorphism. Note that since $V_n \subset V_{n+1}$, there are two ways of assigning qubits to vectors in $V_n$, i.e., one via $B_n$ and one via $B_{n+1}$. However, these are consistent, provided we use the following identification:
\begin{equation}\label{V_extend}
\iota_n : V_n \rightarrow V_{n+1}, \quad
  \iota_n [| \epsilon_1 \, \epsilon_2\, \ldots \epsilon_{n} \rangle]= \frac{1}{\sqrt{2}}\left(| \epsilon_1 \, \epsilon_2\, \ldots \epsilon_{n}\, 0 \rangle
 + | \epsilon_1 \, \epsilon_2\, \ldots \epsilon_{n}\, 1 \rangle \right).
\end{equation}
In other words, $n$-qubits state is identified with an $n+1$-qubit state with a completely unresolved last qubit.  Relations \eqref{V_extend} and \eqref{Gs2G} imply that the following diagram commutes
\begin{equation}\label{diagram_cons}
\begin{array}{lll}
  V_n & \hookrightarrow & V_{n+1}\\
  & \\
  \updownarrow B_n & & \updownarrow B_{n+1} \\
  & \\
  \bigotimes_{j = 1}^n \mathbb{H}_Q & \xrightarrow{\iota_n} &  \bigotimes_{j = 1}^{n+1} \mathbb{H}_Q
\end{array}
\end{equation} 
In this way, the collection of $1$-qubit, $2$-qubit, etc., including $N$-qubit states, can be identified with the collection of $N+1$-qubits states, via the identification given by consecutive application of $\iota_n$. This furnishes an isomorphism:
\begin{equation}\label{two_equiv}
  \mathbb{C}\oplus\bigoplus\limits_{n = 1}^N \bigotimes\limits_{j=1}^n \mathbb{H}_Q \, \equiv \, \bigotimes\limits_{j=1}^{N+1} \mathbb{H}_Q.
\end{equation}
The fist component $\mathbb{C}$ stands for the $0$-qubit states. It is necessary to complete the isomorphism. 
Passing to the limit $N\rightarrow \infty$, we obtain a definition of the state space of an infinite array of qubits. The left-hand side furnishes a rigorous definition of the right-hand side, i.e., $\bigotimes_{j=1}^{\infty} \mathbb{H}_Q$. We will often use the latter symbol with the understanding that its meaning is furnished by the identity
\begin{equation}\label{two_equiv_inf}
  \mathbb{C}\oplus\bigoplus\limits_{n = 1}^\infty \bigotimes\limits_{j=1}^n \mathbb{H}_Q \, \equiv \, \bigotimes\limits_{j=1}^{\infty} \mathbb{H}_Q.
\end{equation}
Note that the left-hand side has the form of a Fock space. 
We also observe the form of the isomorphism on $W_n = V_{n+1}\ominus V_n$; namely, it follows from \eqref{Gs2haar} that
\begin{equation}\label{Haar2qubits}
\begin{array}{cc}
   B_{n+1}^{-1} [H_{n,k}] & =  \frac{1}{\sqrt{2}}\, \left( B_{n+1}^{-1} [ G_{n+1, 2k}] -  B_{n+1}^{-1} [G_{n+1, 2k +1}] \right) \\
   & \\
 & =  \frac{1}{\sqrt{2}}\, \left(|\,\epsilon_1 \, \epsilon_2\, \ldots \epsilon_{n}\, 0\rangle -
|\,\epsilon_1 \, \epsilon_2\, \ldots \epsilon_{n}\, 1\rangle \right).
\end{array}
\end{equation}
where $k = 2^n \sum_{j=1}^{n} \, \epsilon_j/2^j$. 
In light of (\ref{multires}), one obtains the complete Borel isomorphism
\begin{equation}\label{Borel}
  B:\, \mathbb{C}\oplus\bigoplus\limits_{n = 1}^\infty \bigotimes\limits_{j=1}^n \mathbb{H}_Q \rightarrow  V_0\oplus \bigoplus_{n=0}^\infty W_n = L_2(0,1]  ,
\end{equation}
where we set $B|_\mathbb{C} [1] = G_{0,0}$.
The unitary map $B$ furnishes an identification o the space of states of an infinite array of qubits with square integrable functions on the interval $(0,1]$. 

\subsection{Quantum operations}

Quantum operations on states are unitary maps
\[
U: \bigotimes\limits_{j=1}^{\infty} \mathbb{H}_Q\rightarrow \bigotimes\limits_{j=1}^{\infty} \mathbb{H}_Q
\quad \mbox{ equivalently } \quad    B\, U \, B^{-1}:\, L_2(0,1]\rightarrow L_2(0,1].
\]
For simplicity, we will write $U$ instead of $B\, U \, B^{-1}$ when the meaning of the map is clear from context. Similarly, we will use the term state and function interchangeably.
 
At first, we will consider operators that only affect the state of the first $n$ qubits. We will call such operators \emph{local}. Invoking the Borel isomorphism we can state that such an operator is fully determined via its action in the space $V_n$. However, we need to examine how this action extends to the entire space $L_2(0,1]$. First, we emphasize that the extension \emph{is not} achieved by setting it to identity on the orthogonal complement $ \bigoplus_{j=n+1}^\infty W_j $. Instead, the extension is characterized as follows: Suppose an operation $U$ acts nontrivially only on the first $n$ qubits so that, say, $U [G_{n,k}] = \sum_l z_{k,l} G_{n, l} $. Let $f\in L_2(0,1]$ be a general function. We write $f = \sum_{k} f_k$, where $f_k$ is the restriction of $f$ to the support of $G_{n,k}$. Then $U [f] = \sum_l z_{k,l} f_{l} $. In other words, operation $U$ does not alter the qualitative features of $f$ in the scales finer than $n$. As we will see, such operations are best described via the action of integral operators with distributional kernels. 
\vspace{.5cm}

\noindent
\emph{Examples.} Below, we will consider examples that involve the Pauli matrices  
\[
\sigma_x = \left(
  \begin{array}{cc}
    0 & 1 \\
    1 & 0 \\
  \end{array}
\right), \quad
\sigma_y = \left(
  \begin{array}{cc}
    0 & -i \\
    i & 0 \\
  \end{array}
\right), \quad
\sigma_z = \left(
  \begin{array}{cc}
    1 & 0 \\
    0 & -1 \\
  \end{array}
\right)
\]
as well as the matrices
\[
\sigma_+ = \left(
  \begin{array}{cc}
    0 & 0 \\
    1 & 0 \\
  \end{array}
\right),\quad
\sigma_- = \left(
  \begin{array}{cc}
    0 & 1 \\
    0 & 0 \\
  \end{array}
\right).
\]

Consider $\sigma_x$ as acting on the first qubit or on the second qubit, i.e., the maps
\[
\sigma_x^1 = \sigma_x \otimes I \otimes I\otimes I\ldots \mbox{ and } \sigma_x^2 = I\otimes \sigma_x \otimes I \otimes I\ldots.
\]
Both operations have nontrivial impact only on the first two qubits, e.g,
\[
\sigma_x^1 \left( z_{00}|00\rangle +z_{01}|01\rangle +z_{10}|10\rangle +z_{11}|11\rangle   \right) =
z_{00}|10\rangle +z_{01}|11\rangle +z_{10}|00\rangle +z_{11}|01\rangle,
\]
and similarly for $\sigma_x^2$.
The Borel isomorphism allows us to reinterpret this operation: 
\[
\sigma_x^1 \left( z_{00}G_{2,0} +z_{01} G_{2,1} +z_{10}G_{2,2} +z_{11} G_{2,3}   \right) =
z_{00}G_{2,2} +z_{01}G_{2,3} +z_{10}G_{2,0} +z_{11}G_{2,1}.
\]
Note that the map is fully defined by its restriction to $V_2$. (In the case of $\sigma_x^1$ we could have even selected $V_1$, which would result the same extension.) To see how these operators are extended from $V_2$ to the entire Hilbert space, it is convenient to view them as an integral operators with distributional kernels, as illustrated in Fig. \ref{Paulis2}.
Namely, the blue lines in the figure represent the support of the Dirac deltas in the square $(0,1] \times (0,1]$. Generally, these lines would be dressed by coefficients inherited from the operator. Since the Haar system consists of piecewise constant functions, the action of such operators is well defined on any basis function and, therefore, on $L_2(0,1]$. Fig. \ref{Paulis_action} illustrates how the map acts on the general state. As the operation it represents is a local operation on two qubits, it does not affect fine-scale features of the state function beyond the scale of $V_n$.
Of course, the Haar basis functions that encode those features are contained in $ \bigoplus_{j=2}^\infty W_j $. In this example, they are simply permuted in a certain way determined by the integral operator. In all cases, when a map is defined on $V_n$, it will be extended to the entire space in this manner. It is interesting to note that for an operation $\sigma_x^n$, the corresponding support of the measures consists of slant lines at dyadic scales of length $2^{-n}\sqrt{2}$. 

 It is also helpful to note the explicit formulas for gates. Indeed, recall that 
\[
f(x) = \int_{0}^{1} \delta(x-y)\ f(y) \, dy.
\]
Similarly,
\begin{eqnarray}
  \sigma_-^1[f](x)  &=& \chi_{(0,1]}(x)\, \int_{0}^{1}\delta(x-y - 1/2)\, f(y)\, dy\\
  \sigma_+^1[f](x)  &=& \chi_{(0,1]}(x)\, \int_{0}^{1}\delta(x-y + 1/2)\, f(y)\, dy,
\end{eqnarray}
etc. These are the ingredients that contribute to, say, $\sigma_x^{k}, \sigma_y^{k}$, via composition and scaling, see Fig. \ref{Paulis2}. 
\vspace{.5cm}

Another, alternative, characterization of the effect of such operators on general function is given in Lemma \ref{lemmarecursive}. 
In what follows $\epsilon_{k}(x) \in \{0,1\}$ denotes, the $k$-th digit in the dyadic expansion of $x\in(0,1]$, i.e.
$$
x = \sum_{k}\frac{\epsilon_{k}(x)}{2^k}.
$$
In order to avoid ambiguity, dyadic fractions always have an infinite expansion, e.g. $1/2 = 0.0111111\ldots$.
The following result is straightforward but crucially useful:

\begin{lemma}\label{lemmarecursive}
The Borel isomorphism \eqref{Borel_n}-\eqref{Borel_expl} translates the action of respective gates as follows:
\begin{equation}\label{sigma_plus_k}
  \sigma_{+}^{(k)}[f](x) = f\left(\left(x-\frac{1}{2^{k}}\right)\,\epsilon_{k}(x)\right),
\end{equation}
where $f$ is a continuous function in $(0,1]$ and, conventionally, we put $0$ for $f(0)$.
\begin{equation}\label{sigma_minus_k}
 \sigma_{-}^{(k)}[f](x) = f\left(\left(x+\frac{1}{2^{k}}\right)\, (1-\epsilon_{k}(x))\right).
\end{equation}
In addition, since
\begin{equation}
\sigma_{x}=\sigma_{+}+\sigma_{-},
\end{equation}
we also have
\begin{equation}\label{sigma_x_k}
  \sigma_{x}^{(k)}[f](x) = f\left(x+\frac{(-1)^{\epsilon_{k}(x)}}{2^{k}}\right).
\end{equation}
And, since
\begin{equation}
\sigma_{y}=i\sigma_{+}-i\sigma_{-},
\end{equation}
we have
\begin{equation}\label{sigma_y_k}
  \sigma_{y}^{(k)}[f](x) = i\,(-1)^{\epsilon_{k}(x)}\, f\left(x+\frac{(-1)^{\epsilon_{k}(x)}}{2^{k}}\right).
\end{equation}

\end{lemma}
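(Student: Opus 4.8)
The plan is to verify the four transport formulas directly from the definition of the Borel isomorphism \eqref{Borel_expl}, working first on the dense subspace $\bigcup_n V_n$ (where everything reduces to a finite permutation-with-signs of the basis functions $G_{n,k}$) and then noting that the asserted formulas, read as pointwise identities for continuous $f$, automatically give the correct extension to all of $L_2(0,1]$ in the sense described just before the lemma. So I would begin by fixing $n \ge k$ and a basis vector $|\epsilon_1 \cdots \epsilon_n\rangle$, and compute $\sigma_+^{(k)}$ on it using its matrix $\sigma_+ = \bigl(\begin{smallmatrix} 0 & 0 \\ 1 & 0 \end{smallmatrix}\bigr)$ acting in the $k$-th slot: it sends $|\cdots \epsilon_{k-1}\, 0\, \epsilon_{k+1} \cdots\rangle \mapsto |\cdots \epsilon_{k-1}\, 1\, \epsilon_{k+1} \cdots\rangle$ and annihilates the vector if $\epsilon_k = 1$. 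Under $B_n$ this is the statement $G_{n,k'} \mapsto G_{n,k''}$ where the address index changes by exactly $2^{n-k}$, i.e. $k'' = k' + 2^{n-k}$, when the $k$-th binary digit of $2^{-n}k'$ is $0$, and $G_{n,k'} \mapsto 0$ when that digit is $1$.

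Next I would translate this index bookkeeping into the spatial language. The support interval $I_{n,k'}$ has left endpoint $2^{-n}k' = \sum_{j=1}^n \epsilon_j 2^{-j}$, so the $k$-th dyadic digit of every $x \in I_{n,k'}$ equals $\epsilon_k$; this is what lets the condition "$\epsilon_k = 0$" be rewritten as the pointwise factor $\epsilon_k(x)$. Shifting the address index from $k'$ to $k' - 2^{n-k}$ corresponds exactly to the translation $x \mapsto x - 2^{-k}$ of the support, since $2^{-n}\cdot 2^{n-k} = 2^{-k}$. Combining: for $x$ with $\epsilon_k(x) = 1$, the right-hand side of \eqref{sigma_plus_k} evaluates $f$ at $(x - 2^{-k})\cdot 1 = x - 2^{-k}$, which lands in the support of the preimage and reproduces the nonzero matrix entry; for $x$ with $\epsilon_k(x) = 0$, the argument collapses to $f(0) = 0$, matching the annihilation. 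This establishes \eqref{sigma_plus_k} on each $G_{n,k'}$ with $n \ge k$, hence on $\bigcup_n V_n$ by linearity; \eqref{sigma_minus_k} is the adjoint computation, using $1 - \epsilon_k(x)$ in place of $\epsilon_k(x)$ and the opposite shift. Formulas \eqref{sigma_x_k} and \eqref{sigma_y_k} then follow formally from $\sigma_x = \sigma_+ + \sigma_-$ and $\sigma_y = i\sigma_+ - i\sigma_-$: one checks that on any given $x$ exactly one of the two terms is active (the digit $\epsilon_k(x)$ is either $0$ or $1$), so the two branch-wise formulas amalgamate into the single unconditional shift $x \mapsto x + (-1)^{\epsilon_k(x)} 2^{-k}$, and for $\sigma_y$ the surviving coefficient is $i$ on the $\epsilon_k(x)=0$ branch and $-i$ on the $\epsilon_k(x)=1$ branch, i.e. $i(-1)^{\epsilon_k(x)}$.

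The last step is the passage from $\bigcup_n V_n$ to all of $L_2(0,1]$, and this is where I expect the only real subtlety to lie — not in any estimate, since these gates are unitary (or partial isometries) and bounded, but in making precise that the pointwise formulas \eqref{sigma_plus_k}--\eqref{sigma_y_k} are the correct description of the operators defined earlier by the distributional-kernel / "restrict $f$ to the support of $G_{n,k}$ and permute" prescription. I would handle this by the consistency already built into the diagram \eqref{diagram_cons}: the action of $\sigma_\bullet^{(k)}$ on $V_n$ and on $V_{n+1}$ agree under $\iota_n$ because the $k$-th slot is untouched by the refinement of the last qubit, so the operator is well defined on the union, and the pointwise shift formula is manifestly insensitive to $n$; density of $\bigcup_n V_n$ and boundedness then extend both the operator and the identity to $L_2(0,1]$, the pointwise statement being interpreted (as the lemma says) for continuous representatives with the convention $f(0) = 0$. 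A minor point worth a sentence is the ambiguity of dyadic rationals, which is why the text fixes the infinite binary expansion; on the measure-zero set of dyadic points the formulas need not hold pointwise, which is harmless in $L_2$.
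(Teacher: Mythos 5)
Your argument is, in substance, exactly what the paper intends: its own proof of Lemma \ref{lemmarecursive} consists of the words ``by direct inspection,'' and your index bookkeeping on the basis functions $G_{n,k'}$ (flipping the $k$-th digit shifts the address by $2^{n-k}$, hence the support by $2^{-k}$), the identification of the digit condition with the pointwise factor $\epsilon_k(x)$, the consistency under $\iota_n$, and the density--plus--boundedness extension with the caveat about dyadic points are precisely that inspection spelled out. Your verifications of \eqref{sigma_plus_k}, \eqref{sigma_minus_k} and \eqref{sigma_x_k} are correct.

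The one step that fails as written is the $\sigma_y$ case. By your own reading of \eqref{sigma_plus_k}, on the branch $\epsilon_k(x)=0$ the $\sigma_+^{(k)}$ term collapses to $f(0)=0$, so it is the $\sigma_-^{(k)}$ term that survives there; hence the coefficient inherited from $\sigma_y = i\sigma_+ - i\sigma_-$ is $-i$ on the $\epsilon_k(x)=0$ branch and $+i$ on the $\epsilon_k(x)=1$ branch, which amalgamates to
\begin{equation*}
\sigma_y^{(k)}[f](x) \;=\; -\,i\,(-1)^{\epsilon_k(x)}\, f\left(x+\frac{(-1)^{\epsilon_k(x)}}{2^{k}}\right),
\end{equation*}
the negative of what you assert. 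Your sentence assigning the surviving coefficient $i$ to the $\epsilon_k(x)=0$ branch inverts the branch analysis, apparently in order to land on the printed formula \eqref{sigma_y_k} --- which, as a by-product, your computation shows to carry a sign slip. A two-dimensional sanity check confirms this: from the matrix, $\sigma_y|0\rangle = i|1\rangle$, i.e. $\sigma_y^{(1)}[G_{1,0}] = i\,G_{1,1}$, whereas \eqref{sigma_y_k} as printed evaluates on $(1/2,1]$ (where $\epsilon_1(x)=1$) to $i(-1)^{1}G_{1,0}(x-1/2) = -i\,G_{1,1}(x)$. The correct inspection yields the formula with the opposite overall sign (equivalently, with $(-1)^{\epsilon_k}$ evaluated at the source point rather than at $x$); the same remark applies to \eqref{def_C_y-phi}. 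This does not disturb the block-structure results downstream, which rest on $C_\pm$ and are insensitive to an overall sign of $C_y$, but the $\sigma_y$ step as you wrote it does not follow from \eqref{sigma_plus_k}--\eqref{sigma_minus_k}.
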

\begin{proof}
  By direct inspection. 
\end{proof}

\vspace{1cm}

\noindent
\emph{Example 2.} The quantum Fourier transform on $n$ qubits is defined as the map
\[
\mathcal{F}_n:\,\,|l\rangle \mapsto 2^{-n/2}\, \sum\limits_{k =0}^{2^n -1} e^{2\pi i kl/2^n } \,|k\rangle \quad \mbox{ for all } l  \in\{ 1,2,\ldots 2^n-1\}.
\]
where we rely on the shorthand notation (\ref{short}). This can be reinterpreted via the  Borel isomorphism (\ref{Borel_short}) as the map $\mathcal{F}_n^B = B\,\mathcal{F}\, B^{-1}$. Namely,
\begin{equation}\label{QFT_Borel}
\mathcal{F}_n^B:\,\,  G_{n,l} \mapsto 2^{-n/2}\, \sum\limits_{k =0}^{2^n -1} e^{2\pi i kl/2^n } \, G_{n,k}\quad \mbox{ for all } l  \in\{ 1,2,\ldots 2^n-1\}.
\end{equation}
Naturally,  $\mathcal{F}_{n}^B$ acts non-trivially in $V_n$ and extends, as discussed to the whole space. $\mathcal{F}_{n}^B$ is a unitary map in $L_2(0,1]$.
Suppose, for a fixed $n>1$, $f_d = \sum_{k} f_k\, G_{n,k}$. (This is a typical situation when a function is discretized for the sake of numerical analysis. For example, if $f(x)$ is a continuous function in $(0,1]$, we may set $f_k = f(x_k)$ where $x_k\in I_{n,k}$ is some choice of a point for each of the $2^n$ dyadic intervals.) A direct examination shows that $\mathcal{F}_n^B f_d $ is exactly the discrete Fourier transform of the discrete function $f_d$.

It is known that, in principle, the execution of  $\mathcal{F}_n$ may be carried out via $O(n^2)$ local quantum gates. By comparison, the fastest classical algorithm for the discrete Fourier transform, known as the Cooley-Tukey algorithm to execute  $\mathcal{F}_{n}^B$ numerically still requires $O(n2^n)$ arithmetical operations. Similar features can be displayed for a number of other algorithms, where the complexity of the quantum gate circuit implementing it spectacularly undercuts the required number of arithmetical operations of the classical implementation. This promises a leap in computational efficiency when and if the transition to quantum information processing becomes a reality. Of course, this hinges on advances in the technologies for manipulating the quantum states in physical systems, which are being strived for at many current era centers of research.

\begin{figure}[!ht]
\centering
  \includegraphics[width=120mm]{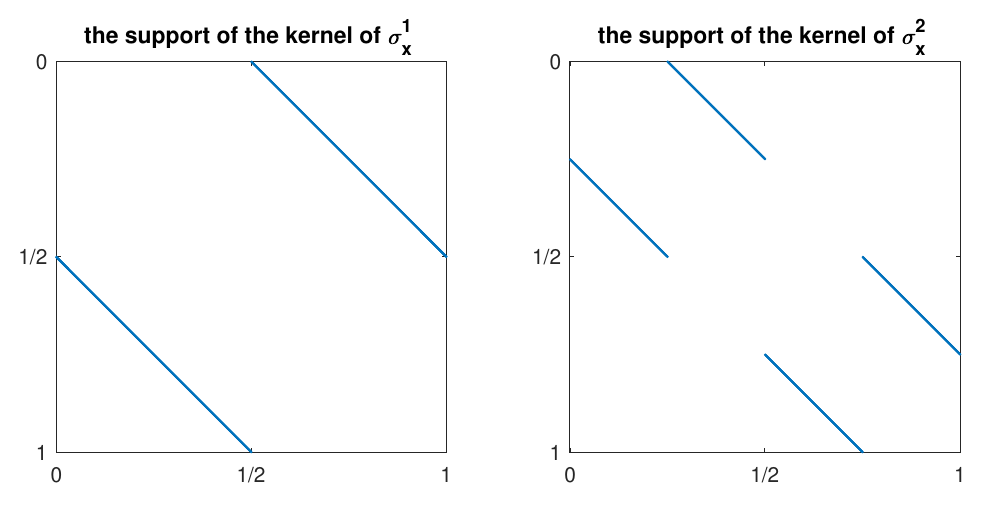}
\caption{\small Pauli gates $\sigma_x$ can be represented as integral operators acting in $L_2(0,1]$. The kernels of these operators are Dirac measures supported in the blue lines.  Note the inverted direction of the y-axis.}
\label{Paulis2}
\end{figure}

\begin{figure}[ht!]
\centering
\includegraphics[width=120mm]{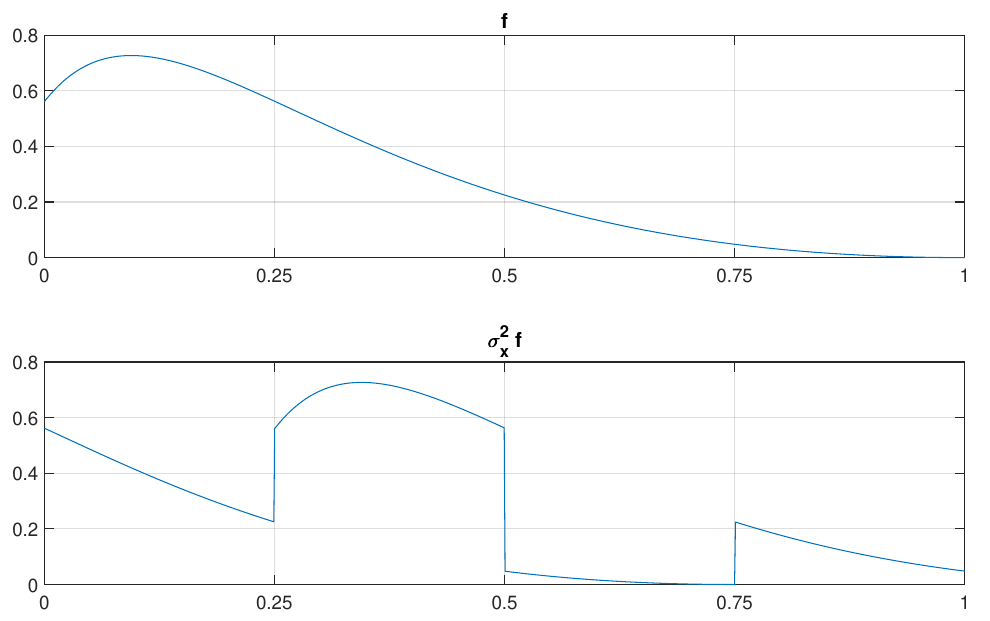}
\caption{The action of $\sigma_x^2$ on a general state function $f\in L_2(0,1]$. This illustrates the general principle behind extending a quantum operation from $V_n$ to the full Hilbert space. }
\label{Paulis_action}
\end{figure}

\section{Some aspects of non-local quantum operations}
In this section, we aim to present a comprehensive introduction to the non-local quantum operation, focusing on the operator denoted as $C_y$. Our approach culminates in  a theorem the key properties of $C_y$ and other, related operators. The Borel isomorphism and the Haar transform are foundational for our approach. Using these tools $C_y$ can be reduced to a sum of an operator that admits a multi-scale block structure and a compact ``remainder". We refer to Section \ref{section_appendix} for the heuristic arguments which shed light on the essential nature of the main problem.
\subsection{Setting the stage}\label{subsection_stage}

We begin with the operator\footnote{It often is convenient to use the term periodization, i.e., one can refer to $C_x$ as periodization of $\sigma_x$. Substituting different operators of $\sigma_x$ one obtains different periodizations.}
\begin{equation}\label{C_x}
 C_x = \sum_{k =1}^{\infty} \frac{1}{2^k} \, \sigma_x^k,
\end{equation}
originally explored in \cite{R1}.
The Borel isomorphism allows us to represent $C_x$ as an integral operator with a distributional kernel. The fractal nature of the kernel, as illustrated in Fig. \ref{Fig_C_x}, reflects the fact that $C_x$ acts nontrivially across all scales, i.e., acts nontrivially on all qubits. It follows from Lemma \ref{lemmarecursive} that the action of the operator on, say, continuous function $f$  is given by an  explicit formula:
 \begin{equation}\label{def_C}
  C_x[f](x) = \sum_k \frac{1}{2^k}f\left(x  + \frac{(-1)^{\epsilon_k(x)}}{2^k}\right)\quad x \in (0,1].
\end{equation}

\begin{figure}[ht!]
\begin{center}
\includegraphics[width=60mm]{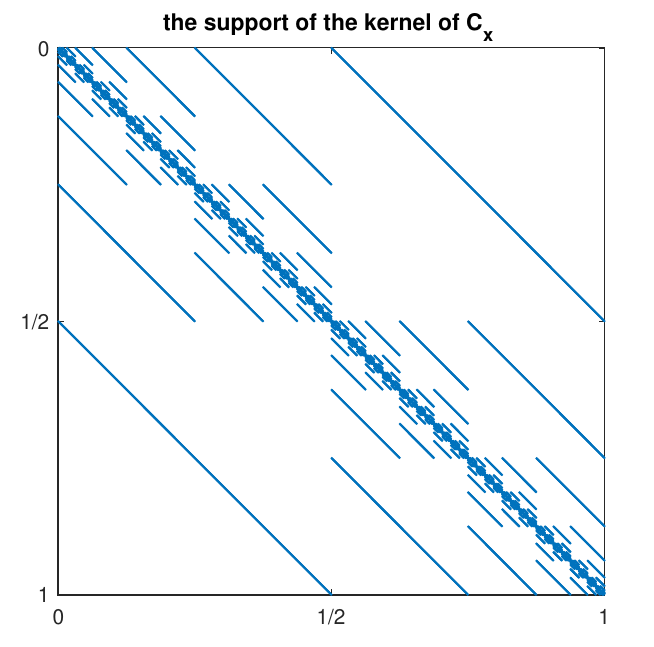}
\caption{The blue lines mark the supports of the Dirac measures in the kernel of the operator $C_x$ as defined in (\ref{C_x}). The support of the operator $C_y$, defined in (\ref{C_y}), is the same.}
\label{Fig_C_x}
\end{center}
\end{figure}

$C_x$ was characterized in \cite{R1}. We summarize those findings: 
Let $D_n$ be the matrix of $\Pi_n C_x \Pi_n$ in the canonical basis $(H_{n,k})_{k=0}^{2^n-1}$. Then\footnote{We generally denote the $k\times k$ identity matrix by $I_k$, but skip the index when the matrix size is clear from context.},
  \begin{equation}\label{block_D}
  D_0 = [0], \quad \mbox{ while } \quad
    D_{n+1} = \frac{1}{2} \left[
                            \begin{array}{cc}
                              D_n & I \\
                              I & D_n \\
                            \end{array}
                          \right]\quad \mbox{ for } n\geq 0.
    \end{equation}
Also, if $n>0$, then $D_n$ is invertible, and its complete list of the eigenvalues is
\begin{equation}\label{eigsD}
\{\pm (2k+1)/2^n: \, k = 0, 1, \ldots 2^{n-1}-1\}.
\end{equation}
$C_x$ preserves the direct sum decomposition (\ref{direct_sum}), i.e.  $C_x V_0 = V_0$ and $C_x W_n \subseteq W_n$ for all $n$. In fact,
\begin{equation}\label{blocks_C}
C_x  =\mathcal{T_H}^\dagger \left(I_1\oplus \bigoplus_{n=0}^\infty \, D_n \right)\mathcal{T_H}.
\end{equation}

\vspace{.2cm}

One of the partial goals of this work is to understand how to obtain results as the above for more general operators. It is not a trivial matter. To explain the nature of the challenge, let us address the case of
\begin{equation}\label{C_y}
 C_y = \sum_{k =1}^{\infty} \frac{1}{2^k} \, \sigma_y^k.
\end{equation}
Again, taking advantage of the Borel isomorphism $C_y$ can be represented via an integral operator with the distributional kernel.  The location of non-trivial measures is precisely as that of $C_x$ shown in Fig. \ref{Fig_C_x}. However, this time the measures are weighted by complex coefficient. A formula analogous to (\ref{def_C}) can be deduced from Lemma \ref{lemmarecursive}, namely:
 \begin{equation}\label{def_C_y-phi}
  C_y[f](x) = \sum_k \frac{(-1)^{\epsilon_k(x)} i}{2^k} f \left(x  + \frac{(-1)^{\epsilon_k(x)}}{2^k}\right)\quad x \in (0,1].
\end{equation}
It is convenient to consider additional operators:
\begin{equation}\label{C_pm}
 C_- = \sum_{k =1}^{\infty} \frac{1}{2^k} \, \sigma_-^k, \quad 
  C_+ = \sum_{k =1}^{\infty} \frac{1}{2^k} \, \sigma_+^k,
\end{equation}
see Fig. \ref{periodized}. Since
$\sigma_y = i(\sigma_+ - \sigma_{-}),$
it follows that
\begin{equation}\label{eqPy}
C_y = i(C_{+} - C_{-}).
\end{equation}
The crucial ingredient in the analysis of operators $C_-, C_+$, and $C_y$ is the operator defined by the kernel\footnote{See Section \ref{section_appendix} for an outline of the heuristics and numerical experiments that lead one to the discovery of the role of operator $L$.} 
\begin{equation}\label{ell_kernel}
\ell(x,y) =
\left\{
    \begin{array}{lr}
        1, & \text{if } x < y        \\
        0, & \text{if } x\geq y   \\
    \end{array}\right..
\end{equation}
Thus, the operator itself is
\begin{equation}\label{ExplicitL}
L[u](x) = \int_{0}^{1}\ell(x,y)u(y)\,dy = \int_{0}^{x}u(y)\,dy,
\end{equation}
which is precisely the anti-derivative. It is now natural to conjecture that when represented in the Haar basis, both operators $C_{-}+L$ and $C_{+}+L'$ turn into infinite block matrices.


\begin{figure}[ht!]
\begin{subfigure}{0.5\textwidth}
\includegraphics[width=1\linewidth, height=7cm]{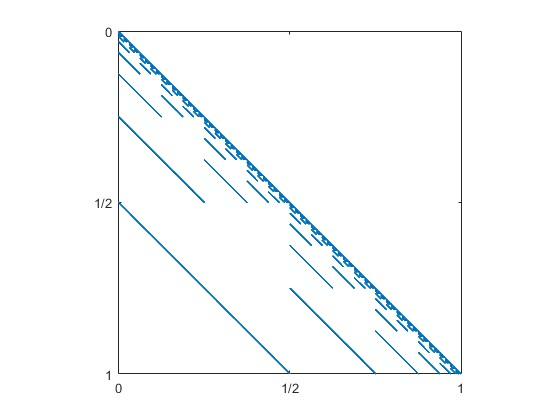} 
\caption{The support of the kernel of $C_{+}$}
\label{fig:C+}
\end{subfigure}
\begin{subfigure}{0.5\textwidth}
\includegraphics[width=1\linewidth, height=7cm]{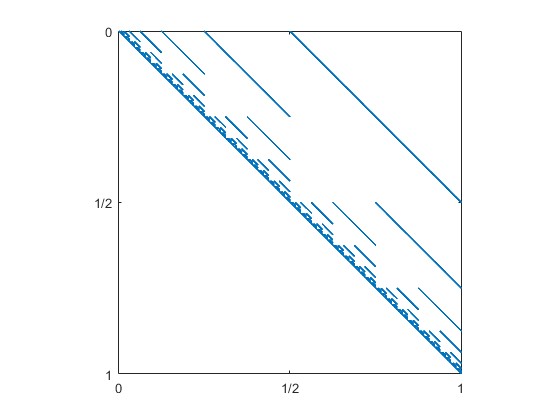}
\caption{The support of the kernel of $C_{-}$}
\label{fig:C-}
\end{subfigure}
\caption{The kernels of $C_+$ and $C_-$ are scaled Dirac measures whose supports are shown by the blue lines.}
\label{periodized}
\end{figure}

\subsection{Launching rigorous analysis}

The conjecture of previous subsection readily translates into a conjecture about the structure of $C_y$. Indeed, \eqref{eqPy} gives
\begin{equation}\label{EqPy}
C_y+i(L'-L)=i(C_{+} -C_{-}+L'-L)\\
           =i((C_{+}+L')-(C_{-}+L)).
\end{equation}
Now we can simplify further and for convenience, we introduce the operators
\begin{equation*}
K=i(L'-L),
\end{equation*}
\begin{equation*}
P_{-}=(C_{-}+L),
\end{equation*}
\begin{equation}
P_{+}=(C_{+}+L'),
\end{equation}
Note that both $L$ and $L'$ having square-integrable kernels are Hilbert-Schmidt operators. Automatically, $K$ is also compact. This gives the main object of our study the following structure:
\begin{equation}\label{Cy_structure}
  C_y = i\, (\, P_+ - P_-) - K.
\end{equation}
As we will see, the first term yields itself to a block representation via the Haar transform. To this end we will examine
\begin{equation}\label{FormulaCy}
 \mathcal{T_H}[i(P_{+}-P_{-})]\mathcal{T_H}'.
\end{equation}
In analogy with $\eqref{def_C_y-phi}$, and according to Lemma \ref{lemmarecursive}, operators $C_{-}$ and $C_{+}$ are defined first giving its explicit construction for a continuous or piecewise constant test function $f (x)$; namely:
 \begin{equation}\label{def_C_+-phi}
  C_{-}[f](x) = \sum_k \frac{1}{2^k}f\left(\left(x  + \frac{1}{2^k}\right)\,(1-\epsilon_k(x))\right)\quad x \in (0,1],
\end{equation}
 \begin{equation}\label{def_C_--phi}
  C_{+}[f](x) = \sum_k \frac{1}{2^k}f\left(\left(x  - \frac{1}{2^k}\right)\,\epsilon_k(x)\right)\quad x \in (0,1].
\end{equation}
Recall that the characteristic function $\chi_{(0,1]} $ is the first element of the Haar basis. We have the following facts:
\begin{lemma}\label{lemmaChi}
$\chi_{(0,1]} $ is the simultaneous eigenvectors of both $P_-$ and $P_+$. Moreover,
\begin{enumerate}

\item
$P_{-}[\chi_{(0,1]}] = \chi_{(0,1]}$,

\item
$P_{+}[\chi_{(0,1]}] = \chi_{(0,1]}$.

 \end{enumerate}
\end{lemma}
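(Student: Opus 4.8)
The plan is to evaluate $P_-[\chi_{(0,1]}]$ and $P_+[\chi_{(0,1]}]$ directly and pointwise in $x\in(0,1]$, exploiting the fact that the coefficients $2^{-k}$ appearing in \eqref{def_C_+-phi}--\eqref{def_C_--phi} are precisely the base-$2$ place values that reconstruct a number from its dyadic digits. The two ingredients about the antiderivative are elementary: from \eqref{ExplicitL}, $L[\chi_{(0,1]}](x)=\int_0^x 1\,dy=x$, and since $L'$ is the Hilbert--Schmidt adjoint of $L$ (kernel $\ell(y,x)$), $L'[\chi_{(0,1]}](x)=\int_x^1 1\,dy=1-x$.

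Next I would compute $C_-[\chi_{(0,1]}]$ from the formula in Lemma~\ref{lemmarecursive}. The $k$-th summand evaluates $\chi_{(0,1]}$ at $(x+2^{-k})(1-\epsilon_k(x))$. When $\epsilon_k(x)=1$ this argument is $0$, and the convention $f(0)=0$ annihilates the term; when $\epsilon_k(x)=0$, adding $2^{-k}$ merely flips the $k$-th dyadic digit of $x$ from $0$ to $1$ with no carry, so $x+2^{-k}\in(0,1]$ and the term equals $2^{-k}$. Hence $C_-[\chi_{(0,1]}](x)=\sum_{k:\,\epsilon_k(x)=0}2^{-k}$, which equals $1-x$ because $\sum_{k\ge 1}2^{-k}=1$ while $\sum_{k:\,\epsilon_k(x)=1}2^{-k}=x$ is just the dyadic expansion of $x$. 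Adding the contribution of $L$ gives $P_-[\chi_{(0,1]}](x)=(1-x)+x=1$, which is item~1.

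The argument for $P_+$ is symmetric. In $C_+[\chi_{(0,1]}](x)=\sum_k 2^{-k}\,\chi_{(0,1]}\!\left((x-2^{-k})\,\epsilon_k(x)\right)$ the terms with $\epsilon_k(x)=0$ vanish, while for $\epsilon_k(x)=1$ one has $x>2^{-k}$ --- here it matters that dyadic rationals carry their non-terminating expansion, so $2^{-k}$ itself has $k$-th digit $0$ --- and therefore $x-2^{-k}\in(0,1]$ and the term equals $2^{-k}$. Thus $C_+[\chi_{(0,1]}](x)=\sum_{k:\,\epsilon_k(x)=1}2^{-k}=x$, and adding $L'[\chi_{(0,1]}](x)=1-x$ yields $P_+[\chi_{(0,1]}](x)=1$, which is item~2. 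All rearrangements of these series are legitimate since each is dominated by $\sum_k 2^{-k}<\infty$.

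The only point requiring genuine care --- the ``main obstacle'', modest as it is --- is the bookkeeping at the endpoints: checking that the shifted arguments $x\pm 2^{-k}$ really lie in $(0,1]$ (never equal to $0$, never outside the interval), which rests squarely on the two conventions built into the setup, namely $f(0)=0$ and the use of non-terminating dyadic expansions. Once those are pinned down, the whole proof collapses to the pair of identities $x=\sum_{\epsilon_k(x)=1}2^{-k}$ and $1-x=\sum_{\epsilon_k(x)=0}2^{-k}$.
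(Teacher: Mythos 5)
Your proposal is correct and follows essentially the same route as the paper's own proof: compute $C_{\mp}[\chi_{(0,1]}](x)$ termwise via Lemma \ref{lemmarecursive} to get $1-x$ and $x$ respectively, pair these with $L[\chi_{(0,1]}](x)=x$ and $L'[\chi_{(0,1]}](x)=1-x$, and add. Your extra attention to the non-terminating dyadic convention and the $f(0)=0$ convention is a welcome tightening of details the paper leaves implicit, but it is not a different argument.
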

\begin{proof}
\begin{enumerate}
\item
First, we examine $C_{-}[\chi_{(0,1]}](x)$:
   \[
   C_{-}[\chi_{(0,1]}](x) = \sum_k \frac{1}{2^k}\chi_{(0,1]}\left((x + \frac{1}{2^k})(1-\epsilon_k(x))\right).
   \]
Note that the $k$-th term of the sum contributes only if the corresponding $\epsilon_k(x) = 0$. In that case $x+1/2^k \in (0,1]$, which implies $\chi_{(0,1]}\left((x + \frac{1}{2^k})(1-\epsilon_k(x))\right) = 1$. Thus,
\[
 C_{-}[\chi_{(0,1]}](x) = \sum_{k: \epsilon_k(x) = 0} \frac{1}{2^k} = 1-x.
\]
At the same time
   \[
   L[\chi_{(0,1]}](x) = \int_0^x \chi_{(0,1]}(y)dy = x.
   \]
Combining the results above we obtain
$P_{-}[\chi_{(0,1]}](x) = C_{-}[\chi_{(0,1]}](x) + L[\chi_{(0,1]}](x) = 1- x + x \equiv 1
 $ for all $x\in (0,1]$.

 \item
Similarly as above, we calculate both terms of $P_{+}[\chi_{(0,1]}]$.  First,
   \[
   C_{+}[\chi_{(0,1]}](x) = \sum_k \frac{1}{2^k}\chi_{(0,1]}\left((x - \frac{1}{2^k})\, \epsilon_k(x)\right).
   \]
   Note that the $k$-th terms contributes to the sum nontrivially only if $\epsilon_k(x) = 1$, in which case $x-1/2^k \in (0,1]$. This in turn implies
   \[
   C_{+}[\chi_{(0,1]}](x) = \sum_{k: \epsilon_k(x) =1} \frac{1}{2^k} = x.
   \]
   Next, we examine $L'[\chi_{(0,1]}](x)$, where $L'$ is the transpose of operator $L$. The transpose of an integral operator with kernel $\ell(x,y)$ is defined as:
   \begin{equation}
   L'[u](x) = \int_0^1 \ell(y,x)u(y)dy.
   \end{equation}
 Considering the kernel $\ell(x,y) = \begin{cases} 1, & \text{if } x < y \\ 0, & \text{if } x \geq y \end{cases}$, we can see that $\ell(y,x)$ is equal to 1 only when $y > x$. Thus, we have:
   \begin{equation}
   L'[\chi_{(0,1]}](x) = \int_x^1 \chi_{(0,1]}(y)dy = \int_x^1 dy = 1 - x.
   \end{equation}
Combining these results, we obtain
 $  P_{+}[\chi_{(0,1]}](x) = C_{+}[\chi_{(0,1]}](x) + L'[\chi_{(0,1]}](x) = x + 1 - x = 1
 $ for all $x\in (0,1]$.
This completes the proof.
\end{enumerate}

\end{proof}

\subsection{Statement of the Theorem}
 We start by stating a theorem characterizing the properties of $C_y$ via the elements of its decomposition (\ref{Cy_structure}).
\begin{theorem}\label{P+theor}
Operators $P_{-}$ and $P_{+}$ have the following properties:
\begin{enumerate}
\item
 Let $D_{-}^{(n)}$ denote the matrix of $\Pi_n P_{-} \Pi_n$ in the canonical basis $(H_{n,k})_{k=0}^{2^n-1}$. Then,
\begin{equation}\label{blocks_Cy}
P_{-}  =\mathcal{\mathcal{T_H}}^\dagger \left([1] \oplus \bigoplus_{n=0}^\infty \, D_{-}^{(n)} \right)\mathcal{\mathcal{T_H}},
\end{equation}
i.e., $\mathcal{T_H} P_{-}\mathcal{T_H}' = \mathcal{T_H}(C_{-}+L)\mathcal{T_H}'$ has block structure. In particular, $P_{-}$ preserves the direct sum decomposition (\ref{direct_sum}), i.e.  $P_{-}V_0 = V_0$ and $P_{-} W_n \subseteq W_n$ for all $n$.

\item
 The following recurrence determines the sequence of blocks:
\begin{equation}\label{blockD-n}
D_{-}^{(0)} = \left[ \begin{array}{cc}
0
\end{array}\right], \quad \text{and} \quad  D_{-}^{(n+1)} =  \frac{1}{2}\left[
                            \begin{array}{cc}
                              D_{-}^{(n)} & I \\
                              0 & D_{-}^{(n)} \\
                            \end{array}
                          \right] \quad \quad \text{for}\quad n\geq 0
\end{equation}
where $0$ and $I$ stand respectively for the trivial and the identity matrices of size $n\times n$.

\item
Analogous statements hold for $P_{+}$.

\item
$P_{-}$ and $P_{+}$ can be extended to continuous operators $P_{+}, P_{-}: L_2(0,1] \rightarrow L_2(0,1]$, which are mutually conjugate.

\end{enumerate}
\end{theorem}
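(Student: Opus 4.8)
The plan separates the four claims into an immediate functional-analytic part (Part 4) and a combinatorial part (Parts 1--3), the latter to be read off from a single self-similarity identity for $P_{-}$. For boundedness and conjugacy: each $\sigma_{\pm}^{(k)}$ is the Borel image of a tensor factor of the partial isometry $\sigma_{\pm}$, hence a contraction on $L_2(0,1]$, so the series $\sum_{k\ge 1}2^{-k}\sigma_{\pm}^{(k)}$ defining $C_{\pm}$ converge absolutely in operator norm, $\|C_{\pm}\|\le\sum_{k}2^{-k}=1$. Since $L$ and $L'$ are Hilbert--Schmidt, the densely defined operators $P_{-}=C_{-}+L$ and $P_{+}=C_{+}+L'$ are bounded and extend uniquely to $L_2(0,1]$. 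Unitarity of the Borel isomorphism together with $\sigma_{-}^{\dagger}=\sigma_{+}$ gives $(\sigma_{-}^{(k)})^{\dagger}=\sigma_{+}^{(k)}$, hence $C_{-}^{\dagger}=C_{+}$; combined with $L^{\dagger}=L'$ this yields $P_{-}^{\dagger}=C_{+}+L'=P_{+}$. I would also use these norm bounds throughout to derive every operator identity below first on the dense subspace of continuous functions, where Lemma~\ref{lemmarecursive} applies, and then extend by continuity.

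The self-similarity identity is the crux. Introduce the isometries $E_{0},E_{1}:L_2(0,1]\to L_2(0,1]$, $E_{0}f(x)=\sqrt2\,f(2x)\,\chi_{(0,1/2]}(x)$ and $E_{1}f(x)=\sqrt2\,f(2x-1)\,\chi_{(1/2,1]}(x)$, so that $L_2(0,1]=E_{0}L_2(0,1]\oplus E_{1}L_2(0,1]$; under the Borel isomorphism $E_{0}$ and $E_{1}$ are the operations of prepending the qubit $|0\rangle$, respectively $|1\rangle$. From Lemma~\ref{lemmarecursive} one obtains the commutation rules $C_{-}E_{0}=\tfrac12 E_{0}C_{-}$ and $C_{-}E_{1}=\tfrac12 E_{0}+\tfrac12 E_{1}C_{-}$ (the new first qubit absorbs $\tfrac12\sigma_{-}^{(1)}$, while the remaining qubits carry a rescaled copy of $C_{-}$), and an elementary computation of $\int_{0}^{x}$ gives the antiderivative rules $LE_{0}=\tfrac12 E_{0}L+\tfrac12 E_{1}Q$ and $LE_{1}=\tfrac12 E_{1}L$, where $Q$ is the orthogonal projection of $L_2(0,1]$ onto $V_{0}$. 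Adding, $P_{-}$ takes the block form relative to $L_2(0,1]=E_{0}L_2(0,1]\oplus E_{1}L_2(0,1]$,
\[ P_{-}\ \cong\ \begin{pmatrix}\tfrac12 P_{-} & \tfrac12 I\\ \tfrac12 Q & \tfrac12 P_{-}\end{pmatrix}, \qquad\text{and dually}\qquad P_{+}\ \cong\ \begin{pmatrix}\tfrac12 P_{+} & \tfrac12 Q\\ \tfrac12 I & \tfrac12 P_{+}\end{pmatrix}, \]
the latter also being forced by $P_{+}=P_{-}^{\dagger}$ and $Q=Q^{\dagger}$.

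For the block structure and the recurrence: since $E_{0}H_{n-1,k}=H_{n,k}$ and $E_{1}H_{n-1,k}=H_{n,\,k+2^{n-1}}$, we have $W_{n}=E_{0}W_{n-1}\oplus E_{1}W_{n-1}$, and $Qw=0$ for $w\in W_{n-1}$ because $W_{n-1}\perp V_{0}$. Hence the identity restricts, for $w\in W_{n-1}$, to $P_{-}E_{0}w=\tfrac12 E_{0}P_{-}w$ and $P_{-}E_{1}w=\tfrac12 E_{0}w+\tfrac12 E_{1}P_{-}w$. Using $P_{-}G_{0,0}=\chi_{(0,1]}=G_{0,0}$ (Lemma~\ref{lemmaChi}) and $H_{0,0}=\tfrac1{\sqrt2}(E_{0}G_{0,0}-E_{1}G_{0,0})$, the identity gives $P_{-}H_{0,0}=0$, so $P_{-}W_{0}\subseteq W_{0}$ and $D_{-}^{(0)}=[0]$. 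Induction on $n$ then proves simultaneously that $P_{-}W_{n}\subseteq W_{n}$ and that, in the ordered basis $(H_{n,k})_{k=0}^{2^{n}-1}$ of $W_{n}=E_{0}W_{n-1}\oplus E_{1}W_{n-1}$,
\[ D_{-}^{(n)}=\begin{pmatrix}\tfrac12 D_{-}^{(n-1)} & \tfrac12 I\\ 0 & \tfrac12 D_{-}^{(n-1)}\end{pmatrix}, \]
which is \eqref{blockD-n}. Together with $P_{-}V_{0}=V_{0}$ and $L_2(0,1]=V_{0}\oplus\bigoplus_{n}W_{n}$, this is precisely the block decomposition \eqref{blocks_Cy}. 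The mirror argument---equivalently, transposing everything, using $D_{+}^{(n)}=(D_{-}^{(n)})^{\dagger}$---gives the corresponding statements for $P_{+}$, with the block recurrence obtained from \eqref{blockD-n} by transposing the right-hand side.

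The genuine obstacle is the pair of antiderivative rules $LE_{0}=\tfrac12 E_{0}L+\tfrac12 E_{1}Q$, $LE_{1}=\tfrac12 E_{1}L$. Everything else (boundedness, conjugacy, the two inductions) is routine once these are in hand; but these identities encode exactly why adding the antiderivative $L$ to $C_{-}$ cancels the fine-scale leakage that the tail $\sum_{k>n}2^{-k}\sigma_{-}^{(k)}$ produces out of $W_{n}$, and pinning down the normalization and the $Q$-correction precisely is what makes the entire block picture coherent. A secondary, purely technical point is the density argument transferring all the operator identities from continuous functions to $L_2(0,1]$.
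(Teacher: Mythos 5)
Your proposal is correct, and it reaches the same recurrence by a genuinely different route than the paper. The paper's proof is computational and partly pictorial: it evaluates $C_-[\chi_{(0,1/2]}]$ and $C_-[\chi_{(1/2,1]}]$ explicitly from the kernel picture to get $P_-[H_{0,0}]=0$, inspects the four quarters of the kernel to handle $W_1$ (obtaining $P_-[H_{1,0}]=0$ and $P_-[H_{1,1}]=H_{1,0}/2$), and then asserts that ``in light of self-similarity and the dyadic structure of Haar scales, this pattern repeats at all scales,'' with the adjoint relation handled by a separate lemma stating $L'|_{\mathbb{H}\ominus V_0}=-L|_{\mathbb{H}\ominus V_0}$. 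You instead promote that self-similarity to exact operator identities: the commutation rules $C_-E_0=\tfrac12E_0C_-$, $C_-E_1=\tfrac12E_0+\tfrac12E_1C_-$, $LE_0=\tfrac12E_0L+\tfrac12E_1Q$, $LE_1=\tfrac12E_1L$ with respect to the dyadic isometries $E_0,E_1$, from which the block recurrence falls out by induction using $W_n=E_0W_{n-1}\oplus E_1W_{n-1}$ and $Q|_{W_{n-1}}=0$. I checked the four identities (including the $Q$-correction in $LE_0$, which as you say is the crux, coming from $\int_0^{1/2}\sqrt2\,g(2y)\,dy=\tfrac1{\sqrt2}\int_0^1g$ on the right half) and they are right; the base case $P_-H_{0,0}=0$ and the induction reproduce \eqref{blockD-n} exactly, in the canonical ordering of $(H_{n,k})$. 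What your version buys is that the ``pattern repeats at all scales'' step is made airtight, and the $P_+$ statements plus part 4 follow uniformly by taking adjoints ($C_-^\dagger=C_+$, $L^\dagger=L'$) rather than by a separate integration-by-parts lemma; what the paper's version buys is brevity and a direct link to the kernel pictures. The only point worth tightening in your write-up is the phrase ``densely defined'': $C_\pm$ and $L$ are already bounded everywhere-defined operators once the norm estimates are in place, so the extension-by-continuity remark applies to verifying the identities on piecewise-constant (Haar) test functions, not to defining the operators.
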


\subsection{Proof of Theorem \ref{P+theor}}

The proof of Theorem 3.1 we be based on Lemma \ref{lemmarecursive}, and also the following fact:
\begin{lemma}
Let $\mathbb{H}\ominus V_0 = \bigoplus\limits_{K=1}^{\infty} W_k $. The following identity characterizes the adjoint of $L$:
\begin{equation}
L'|_{\mathbb{H}\ominus V_0}= -L|_{\mathbb{H}\ominus V_0}.
\end{equation}
\end{lemma}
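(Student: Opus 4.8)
The plan is to derive the identity from a single elementary observation: $L+L'$ is exactly the orthogonal projection onto $V_0$. First I would compute, for an arbitrary $u\in L_2(0,1]$, the pointwise sum of the two integral operators. Since the kernel of $L'$ is $\ell(y,x)$ and $\ell(x,y)+\ell(y,x)=1$ for all $x\neq y$ (the diagonal $\{x=y\}$ having Lebesgue measure zero in the unit square), Fubini gives
\[
L[u](x)+L'[u](x)=\int_0^x u(y)\,dy+\int_x^1 u(y)\,dy=\int_0^1 u(y)\,dy
\]
for almost every $x\in(0,1]$. In other words, $(L+L')[u]$ is the constant function whose value is $\int_0^1 u$.

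Next I would identify this constant function in Haar terms. Since $G_{0,0}=\chi_{(0,1]}$ is a unit vector spanning $V_0$, the constant function $\bigl(\int_0^1 u\bigr)\chi_{(0,1]}$ is precisely $\langle u,G_{0,0}\rangle\,G_{0,0}=\Pi_{V_0}u$, the orthogonal projection of $u$ onto $V_0$. Hence $L+L'=\Pi_{V_0}$ as bounded operators on $L_2(0,1]$ (both $L$ and $L'$ being Hilbert--Schmidt, this is a genuine operator identity). Restricting to $\mathbb{H}\ominus V_0=\bigoplus_{k\geq1}W_k$, on which $\Pi_{V_0}$ vanishes, yields $L'[u]=-L[u]$ for every $u$ in that subspace, which is the claim. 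Note the identity is to be read as an equality of maps from the subspace $\mathbb{H}\ominus V_0$ into the full space $L_2(0,1]$, since $L$ itself does not preserve $\mathbb{H}\ominus V_0$.

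There is no serious obstacle here; the only points requiring a moment's care are the a.e.\ bookkeeping in the first display (the diagonal is null, so $\ell(x,y)+\ell(y,x)=1$ holds off a null set, and Fubini applies because $u\in L_1(0,1]$) and the observation that the relevant constant $\int_0^1 u$ is exactly the coefficient that vanishes on the orthogonal complement of $V_0$ — which is immediate from the orthonormality of the Haar basis, as every $H_{n,k}$ has zero mean. Together with Lemma~\ref{lemmaChi} this also dovetails with Theorem~\ref{P+theor}: on $V_0$ the operators $P_\pm$ fix $\chi_{(0,1]}$, while on the complement the $L$, $L'$ contributions to $P_-$ and $P_+$ differ only by a sign, which is precisely what makes the two recurrences in \eqref{blockD-n} mirror images of each other.
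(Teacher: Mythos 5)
Your proof is correct, and it takes a different (and in one respect tighter) route than the paper's. The paper fixes $f,g\in\mathbb{H}\ominus V_0$ and integrates by parts, using $\int_0^1 f=\int_0^1 g=0$ to kill the boundary term and conclude $\langle g\,|\,Lf\rangle=\langle -Lg\,|\,f\rangle$; as literally written this is a statement about the bilinear form compressed to $\mathbb{H}\ominus V_0$, i.e.\ it shows $(L+L')f\perp \mathbb{H}\ominus V_0$, and one still needs the (easy, but unstated) observation that $(L+L')f$ has no $V_0$-component to upgrade it to the pointwise identity $L'f=-Lf$. Your argument supplies exactly that missing piece by proving the global operator identity $L+L'=\Pi_{V_0}$ from $\ell(x,y)+\ell(y,x)=1$ off the null diagonal, from which the lemma falls out by restriction; the boundary term $\bigl(\int_0^1 g^*\bigr)\bigl(\int_0^1 f\bigr)$ in the paper's integration by parts is precisely the matrix element of your projection, so the two computations carry the same content. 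What your version buys is (i) the conclusion as an equality of maps from $\mathbb{H}\ominus V_0$ into all of $L_2(0,1]$, which is the form actually used later since $L$ does not preserve that subspace, and (ii) insensitivity to the orientation of the kernel --- note the paper's $\ell(x,y)=\mathbf{1}_{\{x<y\}}$ as written in \eqref{ell_kernel} actually yields $\int_x^1 u$ rather than the stated $\int_0^x u$ in \eqref{ExplicitL}, a sign/orientation slip that your symmetric identity $\ell(x,y)+\ell(y,x)=1$ renders harmless.
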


\begin{proof}
Let $f, g \in \mathbb{H}\ominus V_0$, so that $\int_{0}^{1} f(x) dx = 0$ and $\int_{0}^{1} g(x) dx = 0$. Integrating by parts we obtain
\begin{equation*}
\langle g | L f \rangle  = \int_{0}^{1}g^* (x) \, \int_{0}^{x} f (t) dt \, dx
=  - \int_{0}^{1} \int_{0}^{x} g^*(t)dt f(x) dx = \langle -Lg | f \rangle,
\end{equation*}
as claimed.
\end{proof}

We are now in a position to complete the proof of the theorem:

\begin{proof}\label{proof}
In light of Lemma \ref{lemmarecursive} the first block of the matrix of $P_-$ in the Haar basis is, indeed, the $1$-by-$1$ matrix $[1]$. Next, we determine $P_- [\, H_{0,0} ]$ where $H_{0,0} =H = \chi_{(0, 1/2]} - \chi_{(1/2, 1]}$ as in (\ref{the_Haar_fn}). Recall that $P_- = C_- + L$. We have seen in the proof of Lemma \ref{lemmarecursive} that $C_-[\chi_{(0,1]}] = 1-x$. To evaluate $C_- [\chi_{(0, 1/2]}]$ we take advantage of self-similarity, see Fig. \ref{periodized}. We see directly from the graph that
\[
C_-[\chi_{(0,1/2]}] = \frac{1}{2}\, (1-2x)\, \chi_{(0,1/2]}.
\]
Note that the right hand side represents a scaled copy of $1-x$ placed in the first half of the unit interval. Similarly,
\[
C_-[\chi_{(1/2,1]}] = \frac{1}{2}\,\chi_{(0,1/2]} +  (1-x)\,\chi_{(1/2,1]},
\]
i.e., the result is a copy of $\chi_{(0,1/2]} $ itself shifted to the left half, followed by a scaled copy of $1-x$ placed in the right half of the unit interval. Thus,
\[
C_- [H] = C_-[\chi_{(0,1/2]}] - C_-[\chi_{(1/2,1]}] = - \frac{1}{2} |1-2x| = - L [H] ,
\]
so that $P_- [H] = (C_- +L) [H] = 0$. This proves that $P_- W_0 \subset W_0$, and the block $D_-^{(0)} = [0]$, i.e. it  is indeed trivial as in (\ref{blockD-n}).

As the next step, consider  $W_1$, which is spanned by $H_{1,0}$ and $H_{1,1}$. In order to consider the action of $P_-$ on $H_{1,0}$ we only need to consider the upper-left quarter of the kernel of $C_-$. Due to self-similarity, it is clear that $P_- [H_{1,0}] = 0$ as, indeed, this directly caries over from the last step. Next, consider $P_- [H_{1,1}]$. The action of the lower-right quarter of the kernel of $C_-$ (in addition to the action of $L$) again, gives $0$. However, an application of the diagonal of the lower-left quarter results in a scaled replica of $H_{1,0}$ in the first half of the unit interval, i.e., $P_- [H_{1,1}] = H_{1,0} /2$. This demonstrates that $P_- W_1 \subset W_1$, with the matrix of $P_-$ restricted to this space being precisely
\[
\frac{1}{2}\left[
                            \begin{array}{cc}
                             0 & 1 \\
                              0 & 0 \\
                            \end{array}
                          \right] \quad \quad \text{for}\quad n\geq 0.
\]
In light of self-similarity and the dyadic structure of Haar scales, this pattern repeats at all scales, leading to recursion (\ref{blockD-n}).

\end{proof}

%



\subsection{Extensions to the Bloch sphere} \label{secBS}

Let, $(\alpha,\beta,\gamma)$ be a point on the sphere, i.e., $\alpha^2 + \beta^2 + \gamma^2 =1$. Recall that a point on the Bloch sphere is the matrix:
$$
\sigma_{(x, y, z)}=\alpha\, \sigma_x +\beta\, \sigma_y+\gamma\,\sigma_z.
$$
A point on the Bloch sphere can also be associated with the periodized operator $C_{(x, y, z)}=\alpha\ C_x +\beta \, C_y +\gamma \, C_z$. It is an interesting object for further studies. However, the results of previous section are insufficient to address it in full generality. Indeed, $C_z$ has been identified in \cite{R1} as a multiplier; namely, 
\[
C_z[f](x) = (1-2x)\, f(x)\,\, \mbox{ for all } \,\, f\in L_2(0,1]. 
\]
Therefore, $C_z$ does not have eigenvalues. Its spectrum consists of the range of the multiplier function which, again, is the interval $[-1,1]$. Because of that analysis of the spectrum of $C_{(x, y, z)}$ is a new challenge, and the solution will likely require new insights. However, it is interesting to focus on the equator of the Bloch sphere. As we have seen operators $C_x$ and $C_y +K$ reduce to block matrices and so does their linear combination. This warrants a few comments, namely consider the operator:
\begin{equation}\label{Equator}
  C_{\theta}+ \sin(\theta) \,K = \cos(\theta) \, C_x+ \sin(\theta) \,[C_y+K].
\end{equation}
where $\theta\in [0, 2\pi)$ parameterizes the equator in the Bloch sphere.
 Based on the findings of the previous sections, the right-hand side is a block matrix, specifically: 
\[
\cos(\theta )\left([1] \oplus \bigoplus_{n=0}^\infty \, D_n \right) + i\sin(\theta)
 \left([0] \oplus \bigoplus_{n=0}^\infty \, (D_{+}^{(n)} - D_{-}^{(n)})\right) .
\]
It is convenient to rewrite this in a more compact form:
\[
\mathcal{T_H} \,[ C_{\theta} + \sin(\theta) \,  K] \, \mathcal{T_H}'= [\cos\theta] \oplus \bigoplus_{n=0}^\infty \, D_n^\theta .
\]
Collecting the findings of Theorem \ref{P+theor}, Corollary \ref{cor_theta}, and formulas \eqref{block_D}, \eqref{blocks_C},  we obtain the following recurrence relation for the blocks:
  \begin{equation}\label{block_D_theta}
  D_0^\theta = [0], \quad \mbox{ while } \quad
    D_{n+1}^\theta = \frac{1}{2} \left[
                            \begin{array}{cc}
                              D_n^\theta & e^{i\theta} I \\
                              e^{-i\theta} I & D_n^\theta \\
                            \end{array}
                          \right]\quad \mbox{ for } n\geq 0.
    \end{equation}   

\begin{lemma} The set of eigenvalues of $D_n^\theta$ is:
  \[
    \{ \pm(2k+1)/2^n : k=0,1,... 2^{n-1}-1 \}.
    \]
    Let the corresponding eigenvectors be denoted $v_{n,k, \pm}$. The eigenvectors satisfy the recurrence relation 
      \[
v_{1, 0, \pm} = \left[
                            \begin{array}{c}
                              e^{i\theta}  \\
                              \pm 1 \\
                            \end{array}
                          \right], \quad 
 v_{n+1, \pm k \pm 2, \pm} = \left[
                            \begin{array}{c}
                              e^{i\theta} v_{n,k, \pm} \\
                              \pm  v_{n,k, \pm}\\
                            \end{array}
                          \right].                         
    \] 
    Note that the number of eigenvalues and eigenvectors is doubled when passing from $n$ to $n+1$. 
\end{lemma}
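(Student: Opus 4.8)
The plan is to establish the eigenvalue formula and the eigenvector recurrence \emph{together, by induction on} $n$, using the block recursion \eqref{block_D_theta}. For the base case $n=1$ one computes directly: $D_1^\theta=\tfrac12\left[\begin{smallmatrix}0&e^{i\theta}\\ e^{-i\theta}&0\end{smallmatrix}\right]$ has characteristic polynomial $\lambda^2-\tfrac14$, hence eigenvalues $\pm\tfrac12=\pm(2\cdot0+1)/2^1$, and solving $D_1^\theta v=\pm\tfrac12 v$ yields precisely $v_{1,0,\pm}=\left[\begin{smallmatrix}e^{i\theta}\\ \pm1\end{smallmatrix}\right]$, which is the stated initial data.

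For the inductive step, assume $D_n^\theta$ has the $2^n$ eigenvalues in the list; these are exactly the odd integer multiples of $2^{-n}$ lying in $(-1,1)$, hence pairwise distinct, so the spectrum is simple and each eigenspace is the line spanned by the corresponding $v_{n,k,\pm}$. Given an eigenvector $v$ of $D_n^\theta$ with $D_n^\theta v=\mu v$, I would look for eigenvectors of $D_{n+1}^\theta$ of the form $\left[\begin{smallmatrix}a v\\ b v\end{smallmatrix}\right]$. Substituting into $D_{n+1}^\theta\left[\begin{smallmatrix}a v\\ b v\end{smallmatrix}\right]=\lambda\left[\begin{smallmatrix}a v\\ b v\end{smallmatrix}\right]$ and using $D_n^\theta v=\mu v$ collapses the $2^{n+1}$-dimensional eigenequation to the $2\times2$ system $\left[\begin{smallmatrix}\mu&e^{i\theta}\\ e^{-i\theta}&\mu\end{smallmatrix}\right]\left[\begin{smallmatrix}a\\ b\end{smallmatrix}\right]=2\lambda\left[\begin{smallmatrix}a\\ b\end{smallmatrix}\right]$. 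This $2\times2$ matrix has eigenvalues $\mu\pm1$ with eigenvectors $\left[\begin{smallmatrix}e^{i\theta}\\ \pm1\end{smallmatrix}\right]$, so every $v$ lifts to exactly the two eigenvectors $\left[\begin{smallmatrix}e^{i\theta}v\\ \pm v\end{smallmatrix}\right]$ of $D_{n+1}^\theta$, with eigenvalues $\tfrac12(\mu\pm1)$ — which is the asserted recurrence.

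It remains to check that these lifts exhaust the spectrum of $D_{n+1}^\theta$ and reproduce the claimed list. Writing $\mu=m/2^n$ with $m$ odd and $|m|\le2^n-1$, one has $\tfrac12(\mu\pm1)=(m\pm2^n)/2^{n+1}$ with $m\pm2^n$ odd and $0<|m\pm2^n|<2^{n+1}$, so each lifted eigenvalue is again an odd multiple of $2^{-(n+1)}$ in $(-1,1)$; and the map $(\mu,\varepsilon)\mapsto\tfrac12(\mu+\varepsilon)$, $\varepsilon\in\{\pm1\}$, is injective, because $\tfrac12(\mu_1+\varepsilon_1)=\tfrac12(\mu_2+\varepsilon_2)$ forces $\mu_1-\mu_2=\varepsilon_2-\varepsilon_1\in\{0,\pm2\}$ and $|\mu_i|<1$ then forces $\mu_1=\mu_2$, $\varepsilon_1=\varepsilon_2$. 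Thus the $2^n$ eigenvectors of $D_n^\theta$ generate $2^{n+1}$ \emph{distinct} eigenvalues, and since there are exactly $2^{n+1}$ odd multiples of $2^{-(n+1)}$ in $(-1,1)$ they fill the list precisely; because $D_{n+1}^\theta$ is $2^{n+1}\times2^{n+1}$ with simple spectrum, the lifted vectors are automatically a complete eigenbasis and are unambiguous up to scalars. Relabelling the eigenvectors at level $n+1$ by the sign and magnitude of the new eigenvalue then casts the lift in the indexed form of the lemma. I expect no serious obstacle. The one point needing care is that \emph{simplicity of the spectrum} is what makes the $v_{n,k,\pm}$ well-defined up to scaling and what promotes the lifted family from a mere set of eigenvectors to an eigenbasis; this, together with the elementary count certifying that no eigenvalue of $D_{n+1}^\theta$ is missed, closes the induction.
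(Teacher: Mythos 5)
Your proposal is correct and follows essentially the same route as the paper's proof: verify the base case $D_1^\theta$ directly, then use the block recursion \eqref{block_D_theta} to lift each eigenvector $v$ of $D_n^\theta$ to the pair $\bigl[\begin{smallmatrix} e^{i\theta}v \\ \pm v\end{smallmatrix}\bigr]$ with eigenvalues $\tfrac12(\mu\pm1)$. The only difference is that you spell out the counting and simplicity-of-spectrum argument showing the lifted vectors exhaust the spectrum, which the paper leaves implicit; this is a welcome tightening rather than a new approach.
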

\begin{proof}
  Observe that the matrix
 \[
D_{1}^\theta = \frac{1}{2} \left[
                            \begin{array}{cc}
                              0 & e^{i\theta}  \\
                              e^{-i\theta}  & 0 \\
                            \end{array}
                          \right]
\]
has eigenvalues $\pm 1/2$ with the corresponding eigenvectors $v_{1,0,\pm}$ as claimed. 
Moreover, the recurrence (\ref{block_D_theta}) indicates that 
\[
D_{n+1}^\theta \,  \left[
                            \begin{array}{c}
                              e^{i\theta} v_{n,k, \pm} \\
                              \pm  v_{n,k, \pm}\\
                            \end{array}
                          \right]= \frac{1}{2}\left(\frac{\pm(2k+1)}{2^n} \pm 1 \right) 
                          \left[
                            \begin{array}{c}
                              e^{i\theta} v_{n, \pm} \\
                              \pm  v_{n, \pm}\\
                            \end{array}
                          \right].
\]
This furnishes the induction argument that proves the lemma. 
\end{proof}

We obtain the following 
\begin{corollary} \label{cor_theta}
  We have the following facts:
  \begin{enumerate}
    \item The complete list of eigenvalues of $ C_{\theta} + \sin(\theta) \,  K$ is:
    \[
    \{ \pm(2k+1)/2^n : k=0,1,... 2^{n-1}-1 \}.
    \]
    \item The spectrum of $ C_{\theta} + \sin(\theta) \,  K$ is the closure of the set of eigenvalues, i.e., 
    \[
    \Sigma\, ( C_{\theta} + \sin(\theta) \,  K) = [-1, 1].
    \]
    \item The essential spectrum \footnote{The essential spectrum of an operator remains unchanged when the operator is perturbed by a compact operator.} of $C_\theta$ is the interval $[-1,1]$.
     \end{enumerate}
\end{corollary}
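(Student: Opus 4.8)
\textbf{Plan of the proof of Corollary \ref{cor_theta}.}

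The three assertions follow in a chain, each leaning on the previous one together with the block structure established in Theorem \ref{P+theor} and in the preceding Lemma. For item (1), I would argue that the operator $C_\theta + \sin(\theta)K$ is, via $\mathcal{T_H}$, unitarily equivalent to the direct sum $[\cos\theta] \oplus \bigoplus_{n\geq 0} D_n^\theta$, so its point spectrum is exactly the union of $\{\cos\theta\}$ with the point spectra of the finite blocks $D_n^\theta$. The preceding Lemma computes the latter: $D_n^\theta$ has eigenvalues $\{\pm(2k+1)/2^n : k = 0,\dots,2^{n-1}-1\}$. One then checks that as $n$ grows these nested sets absorb the previous ones and that $\cos\theta$ is not an isolated extra point — in fact $\cos\theta \in [-1,1]$ and is a limit of eigenvalues $(2k+1)/2^n$, so it contributes nothing new to the \emph{list} of eigenvalues. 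Hence the complete list of eigenvalues is $\bigcup_{n\geq 1}\{\pm(2k+1)/2^n\}$, which is exactly the set of dyadic rationals of the form $(\text{odd})/2^n$ in $(-1,1)$, as stated. (A small point to be careful about: an eigenvector of a single block $D_n^\theta$ must be checked to remain an eigenvector of the whole direct sum, which is immediate since the blocks act on mutually orthogonal subspaces $W_n$.)

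For item (2), I would invoke the general principle that the spectrum of a bounded self-adjoint-like (here, the operator is of the form $\cos\theta\, C_x + \sin\theta\,[C_y+K]$; note $C_x$ is self-adjoint and $C_y + K = i(P_+ - P_-)$ with $P_+ = P_-^\dagger$ by Theorem \ref{P+theor}(4), so $i(P_+-P_-)$ is self-adjoint) operator is closed, hence contains the closure of its point spectrum. The closure of $\{\pm(2k+1)/2^n\}$ over all $n$ is all of $[-1,1]$, since these dyadic points are dense in $[-1,1]$. For the reverse inclusion $\Sigma \subseteq [-1,1]$, I would bound the operator norm: each block $D_n^\theta$ has norm at most $1$ (the recurrence $D_{n+1}^\theta = \tfrac12\begin{bmatrix} D_n^\theta & e^{i\theta}I \\ e^{-i\theta}I & D_n^\theta\end{bmatrix}$ gives, by a triangle-inequality induction, $\|D_{n+1}^\theta\| \le \tfrac12(\|D_n^\theta\| + 1) \le 1$ whenever $\|D_n^\theta\|\le 1$, with base $\|D_0^\theta\| = 0$), and $|\cos\theta|\le 1$, so the direct sum has norm $\le 1$; a self-adjoint operator of norm $\le 1$ has spectrum inside $[-1,1]$. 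Combining the two inclusions gives $\Sigma(C_\theta + \sin\theta\, K) = [-1,1]$.

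For item (3), I would use the fact quoted in the footnote (Weyl's theorem): the essential spectrum is invariant under compact perturbations. We have $C_\theta = [C_\theta + \sin\theta\, K] - \sin\theta\, K$, and $K$ is compact (it was noted in the text that $K = i(L' - L)$ is compact, being built from Hilbert–Schmidt operators). Hence $\Sigma_{\mathrm{ess}}(C_\theta) = \Sigma_{\mathrm{ess}}(C_\theta + \sin\theta\, K)$. It remains to identify the essential spectrum of the block operator: since each eigenvalue $\pm(2k+1)/2^n$ has finite multiplicity (it comes from a finite block $D_n^\theta$, possibly with coincidences across finitely many blocks) but the eigenvalues accumulate everywhere in $[-1,1]$, every point of $[-1,1]$ is either an accumulation point of the spectrum or an eigenvalue of infinite multiplicity; in either case it lies in the essential spectrum, while no point outside $[-1,1]$ does by item (2). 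Therefore $\Sigma_{\mathrm{ess}}(C_\theta) = [-1,1]$.

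The main obstacle is the bookkeeping in item (3): one must verify carefully that \emph{every} isolated-looking eigenvalue of $C_\theta + \sin\theta\,K$ actually fails to be an isolated eigenvalue of finite multiplicity of the full operator — i.e., that the essential spectrum really is all of $[-1,1]$ and not merely its non-isolated part. This is where the density of the dyadic eigenvalues does the work: around any $\lambda_0 = (2k+1)/2^n$ there are infinitely many other eigenvalues from higher-level blocks accumulating at $\lambda_0$, so $\lambda_0$ is not isolated in the spectrum and hence lies in $\Sigma_{\mathrm{ess}}$. Making this precise — counting multiplicities and confirming no gaps open up — is the only step that needs genuine care; the rest is a routine application of the norm bound, density, and Weyl's theorem.
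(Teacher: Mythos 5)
Your proposal follows essentially the same route as the paper, which gives no separate argument for the corollary but derives it directly from the preceding Lemma (the eigenvalues of the blocks $D_n^\theta$), the block decomposition $[\cos\theta]\oplus\bigoplus_n D_n^\theta$, the density of the dyadic points $\pm(2k+1)/2^n$ in $[-1,1]$, and the invariance of the essential spectrum under the compact perturbation $\sin(\theta)\,K$; your explicit norm bound $\|D_n^\theta\|\le 1$ and your attention to multiplicities in item (3) are sensible supplements rather than deviations. One caveat: your dismissal of $\cos\theta$ (``it contributes nothing new to the list of eigenvalues'' because it is a limit of the other eigenvalues) is a non sequitur --- the constant function $G_{0,0}$ is a genuine eigenvector with eigenvalue $\cos\theta$, which for generic $\theta$ is not of the form $\pm(2k+1)/2^n$, so the ``complete list'' in item (1) should include it --- but this imprecision is shared by the paper's own statement and does not affect items (2) and (3), since $\cos\theta\in[-1,1]$ in any case.
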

\vspace{.5cm}

\noindent 
\emph{Remark.} Note that the compact correction disappears at exactly two point on the Bloch sphere's equator, that is at $\theta =0, \pi$. Note that $C_{\theta = 0 } = C_x$, and $C_{\theta = \pi } = -C_x$. Of course, $\|C_x\|=1$. It is also interesting to observe the following: 
\begin{proposition} \label{prop_norm}
$\|C_\theta\| \leq 1$.
\end{proposition}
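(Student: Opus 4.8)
\emph{Proof proposal.} The plan is to exploit the fact that $C_\theta$, like $C_x$ and $C_y$, is a norm-convergent geometric series of unitary operators. Write
\[
\sigma_\theta := \cos\theta\,\sigma_x + \sin\theta\,\sigma_y = \left(\begin{array}{cc} 0 & e^{-i\theta} \\ e^{i\theta} & 0 \end{array}\right)
\]
for the $2$-by-$2$ matrix attached to the equator point $\theta$, and let $\sigma_\theta^{(k)}$ denote the corresponding local gate acting on the $k$-th qubit. By linearity, together with \eqref{C_x}, \eqref{C_y}, and \eqref{Equator},
\[
C_\theta = \sum_{k=1}^\infty \frac{1}{2^k}\,\sigma_\theta^{(k)}.
\]

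First I would check that each $\sigma_\theta^{(k)}$ is unitary on $L_2(0,1]$. On the qubit side $\sigma_\theta^{(k)} = I\otimes\cdots\otimes\sigma_\theta\otimes\cdots\otimes I$ is a tensor product of identities with the single unitary matrix $\sigma_\theta$, hence unitary; since the Borel isomorphism $B$ of \eqref{Borel} is unitary, $B\,\sigma_\theta^{(k)}\,B^{-1}$ is unitary on $L_2(0,1]$. Equivalently, combining \eqref{sigma_x_k} and \eqref{sigma_y_k} of Lemma \ref{lemmarecursive} gives
\[
\sigma_\theta^{(k)}[f](x) = e^{\,i(-1)^{\epsilon_k(x)}\theta}\; f\!\left(x + \frac{(-1)^{\epsilon_k(x)}}{2^k}\right),
\]
which is the composition of the measure-preserving involution of $(0,1]$ that pairs up the dyadic intervals at scale $k$ with multiplication by the unimodular function $x\mapsto e^{\,i(-1)^{\epsilon_k(x)}\theta}$; both factors are unitary, so $\|\sigma_\theta^{(k)}\|=1$.

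The conclusion then follows from the triangle inequality and absolute convergence in operator norm:
\[
\|C_\theta\| \;\leq\; \sum_{k=1}^\infty \frac{1}{2^k}\,\|\sigma_\theta^{(k)}\| \;=\; \sum_{k=1}^\infty \frac{1}{2^k} \;=\; 1.
\]

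I do not expect a genuine obstacle here; the only points deserving a word of care are that the series defining $C_\theta$ converges in operator norm — it does, being dominated by $\sum_k 2^{-k}$ — and that each summand has norm $1$ as an operator on all of $L_2(0,1]$ rather than merely on a finite-dimensional $V_n$, which is exactly what the unitarity check above supplies. The same argument yields $\|C_x\|\le 1$ and $\|C_y\|\le 1$, and the already-noted equality $\|C_x\|=1$ shows the estimate is sharp at $\theta=0,\pi$; whether $\|C_\theta\|=1$ for every $\theta$ is a finer question that this bound leaves open.
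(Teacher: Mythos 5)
Your proof is correct, and it takes a genuinely different route from the paper's. The paper works pointwise with the explicit formulas \eqref{def_C} and \eqref{def_C_y-phi}: it writes out $\|C_\theta[f]\|^2$ as an integral, bounds the unimodular coefficient $\cos\theta+(-1)^{\epsilon_k(x)}i\sin\theta$ by $1$, applies Jensen's inequality (``convexity of the parabola'') to the weights $2^{-k}$, and finally uses the fact that each dyadic swap $x\mapsto x+(-1)^{\epsilon_k(x)}/2^k$ preserves the $L_2$ norm. You package exactly those same two ingredients --- the unimodular phase and the measure-preserving dyadic involution --- into the single statement that each $\sigma_\theta^{(k)}$ is unitary, and then replace the Jensen step by the triangle inequality for the operator norm applied to the absolutely convergent series $\sum_k 2^{-k}\sigma_\theta^{(k)}$. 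Your version is shorter and more structural: it makes clear that the bound has nothing to do with the equator specifically and would hold for any norm-convergent series $\sum_k \lambda_k U_k$ of unitaries with $\sum_k|\lambda_k|\le 1$, whereas the paper's computation is tied to the explicit kernel representation it has already set up (and which it reuses elsewhere). Your double-check that unitarity holds on all of $L_2(0,1]$ --- via the Borel isomorphism on the qubit side, or equivalently via the factorization into a measure-preserving involution composed with a unimodular multiplier --- is exactly the point that needed care, and you handled it; the closing remarks about sharpness at $\theta=0,\pi$ are consistent with the paper's own remark following the proposition.
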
 
\begin{proof}
In light of (\ref{def_C}) and (\ref{def_C_y-phi}), for an $f \in L_2(0,1]$, we have:
\begin{equation*}
\|C_\theta[f]\|^2 = \int_0^1 \left|\sum_k \frac{\cos \theta + (-1)^{\epsilon_k(x)}i\sin\theta  }{2^k}f\left(x + \frac{(-1)^{\epsilon_k(x)}}{2^k}\right)\right|^2 dx \\
\end{equation*}
\begin{equation*}
\leq \int_0^1 \left(\sum_k \frac{1 }{2^k}\left|f\left(x + \frac{(-1)^{\epsilon_k(x)}}{2^k}\right)\right|\right)^2 dx \\
\end{equation*}
and so, because of the convexity of the parabola,
\begin{equation*}
\leq \int_0^1 \sum_k \frac{1}{2^k}\left|f\left(x + \frac{(-1)^{\epsilon_k(x)}}{2^k}\right)\right|^2 dx \leq \|f\|^2.
\end{equation*}
Therefore, $||C_\theta\| \leq 1$.
\end{proof}


\begin{figure}[ht!]
\begin{center}
\includegraphics[width=150mm]{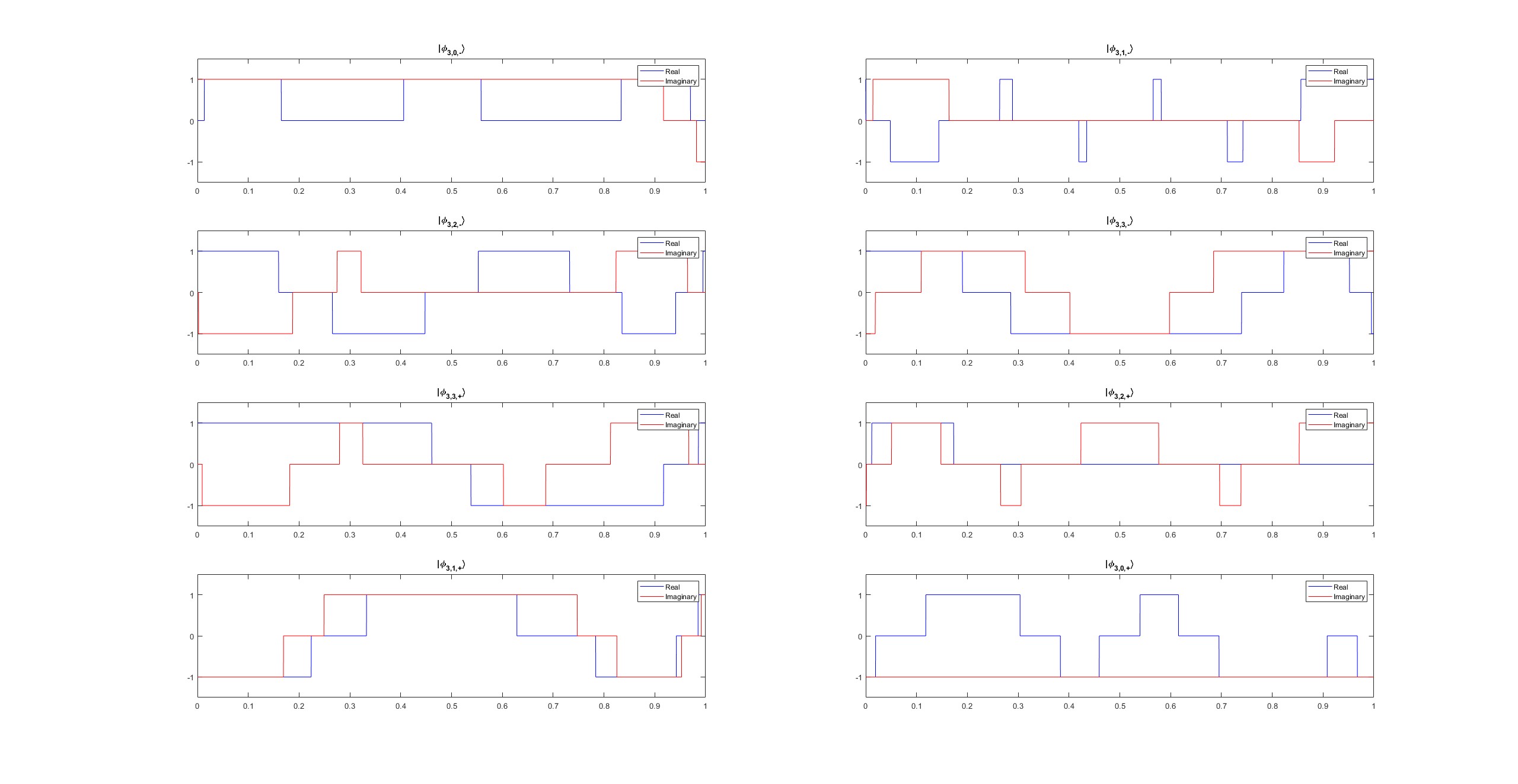}
\caption{Representation of the real and imaginary parts of the eigenstates of operator $C_y+K$ corresponding to the eigenvalues $\pm \frac{7}{8}, \pm \frac{5}{8}, \pm \frac{3}{8}, \pm \frac{1}{8}$.
}
\label{ESCY}
\end{center}
\end{figure}


\section{The dynamics of an infinite qubit array}

We now turn attention to a generalized Jaynes-Cummings model. Recall the Hamiltonian governing the interaction between a linear oscillator, representing a single mode of electromagnetic field, and a two-level system (qubit), see e.g., \cite{Shore_Knight, Gerry-Knight}:
\begin{eqnarray}
\mathcal{H} = I\otimes \mathcal{H}_F + \mathcal{H}_q\otimes I +\mathcal{H}_I:\quad
\mathbb{H}_Q\otimes\mathbb{H}_F \longrightarrow
\mathbb{H}_Q\otimes\mathbb{H}_F,
\end{eqnarray}
in which $\mathbb{H}_F= L_2(\mathbb{R})$ is the oscillator's Hilbert space. Furthermore,
\begin{eqnarray}
  &&\mathcal{H}_F = \hbar\omega\, (\hat{a}^\dagger \hat{a} + 1/2)   \quad  \\
 &&\mathcal{H}_q =  \frac{\hbar\Omega}{2}(|e\rangle\langle e| -|g\rangle\langle
g| ) = -\frac{\hbar\Omega}{2}\sigma_z,   \quad 
\end{eqnarray}  
whereas the qubit-field interaction may assume the form
\[
\mathcal{H}_I = \lambda\hbar\,\left(\cos(\theta) \, \sigma_x+ \sin(\theta) \, \sigma_y\right)\otimes (\hat{a} + \hat{a}^\dagger).
\]
(Note that the case of $\theta = 0$ is the one originally considered in \cite{R1}.)
Naively, the generalized interaction term for an infinite array could be postulated in the form
\[
\lambda\hbar \, \left(\cos(\theta) \, C_x+ \sin(\theta) \, C_y\right)\otimes (\hat{a} + \hat{a}^\dagger) =
\lambda\hbar \, C_\theta\otimes (\hat{a} + \hat{a}^\dagger)
\]
However, the form more amenable to analysis is: 
\[
\mathcal{H}_I = \lambda\hbar\, (C_{\theta}+ \sin(\theta) \,K) \otimes (\hat{a} + \hat{a}^\dagger)
\]
Similarly, the Hamiltonian $\mathcal{H}_q$, when generalized to an infinite array, assumes the form 
\begin{eqnarray}
V = \sum_{k=1}^{\infty} -\frac{\hbar\Omega_k}{2}\sigma_z^{k}, \quad
\text{ where } \Omega_k = \frac{\Omega}{2^k}.
\end{eqnarray}
It was established in \cite{R1} that in fact
\begin{eqnarray}
V[f](x) = V(x)f(x)=\left(x-\frac{1}{2}\right)f(x).
\end{eqnarray}
Thus, the total Hamiltonian has the form
\begin{eqnarray}
\mathcal{H}_{QMM} = I\otimes \mathcal{H}_F  +\lambda\hbar (C_{\theta}+ \sin(\theta) \,K) \otimes(\hat{a} + \hat{a}^\dagger) + V(x)\otimes I.
\end{eqnarray}
Furthermore, we pass to the regime where  $\Omega$ (representing the scale of qubit excitation energies) is negligible compared to the energy of the field mode, $\omega$, and the qubit-field interaction scale, $\lambda$, i.e., we focus on
\begin{eqnarray}
\mathcal{H}'_{QMM} = I\otimes \mathcal{H}_F +\lambda\hbar (C_{\theta}+ \sin(\theta) \,K) \otimes(\hat{a} + \hat{a}^\dagger).\label{H'}
\end{eqnarray}
With these assumptions in place, analysis of the dynamics may be carried out in the same manner as in the case $\theta = 0$ described in \cite{R1}. For the reader's convenience we summarise those arguments here.  The mixed state of the system satisfies the Heisenberg equation, i.e.,
\begin{equation}\label{Heis_full}
  i\partial_t \hat{\rho} =[\mathcal{H}'_{QMM}, \hat{\rho}].
\end{equation}
We search for solutions in the form
\[
\hat{\rho}=|\Phi_{n,k,s,\theta}\rangle \langle \Phi_{n,k,s, \theta}|\otimes \hat{\rho}_F .
\]
Here $\hat{\rho}_F:\mathbb{H}_F \longrightarrow
\mathbb{H}_F$, $s = \pm $ and $\Phi_{n,k,s,\theta}$ is an eigenstate of $C_{\theta}+ \sin(\theta) \,K$ corresponding to the eigenvalue
$E_{n,k,s,\theta}=s(2k+1)/2^n$, where $k = 0, 1, 2, ...2^{n-1}-1$ for $n = 1, 2,...$. Examples of such eigenfunctions, when $\theta = \frac{\pi}{3}$, are given in Fig.\ref{Cthetapi3}. Equation \eqref{Heis_full} implies that $\hat{\rho}_F$ evolves via 
\begin{equation}\label{Heis_F}
  i\partial_t \hat{\rho}_F =[\mathcal{H}_F +\lambda E_{n,k,s,\theta} (\hat{a}+\hat{a}^\dagger), \hat{\rho}_F].
\end{equation}
In the Wigner representation, see e.g., \cite{Gardiner-Zoller}, \cite{Gosson}, the operator $\hat{\rho}_F$  is replaced by a function of two real variables given by
\[
f(q, p) = \int_{\mathbb{R}}d\xi_1 \int_{\mathbb{R}}d\xi_2 \, e^{-2\pi i(\xi_1 q+\xi_2 p)}\, Tr(W_{\xi_1, \xi_2}\hat{\rho}_F),
\]
where $W_{\xi_1, \xi_2}= \exp (2\pi i(\xi_1 \hat{q}+\xi_2 \hat{p}))$. Equation (\ref{Heis_F}) is equivalent to a first-order partial differential equation, specifically:
\[
\partial_t f = (q+\lambda E_{n,k,s, \theta})\partial_p f -p\partial_q f.
\]
Applying the method of characteristics we find that $f$ is a constant along each characteristic curve. Each characteristic curve in the $(q,p)$ plane follows a different trajectory based on the values of $E_{n,k,s,\theta}$. In other words, $f(t,q,p)=f(0, q(-t), p(-t))$ where
\[
\begin{bmatrix}
q(t) \\ p(t)
\end{bmatrix}
=
\begin{bmatrix}
\cos(t) & -\sin(t) \\
\sin(t) & \cos(t)
\end{bmatrix}
\begin{bmatrix}
q + \lambda E_{n,k,s,\theta} \\ p
\end{bmatrix}
-
\begin{bmatrix}
\lambda E_{n,k,s,\theta} \\ 0
\end{bmatrix}
\]
This describes the dynamics of the infinite array of qubits explicitly.
The array of qubits may be viewed as an example of a Quantum Metamaterial (QMM). More precisely, when the QMM collapses on a particular quantum state (through a measurement on this subsystem), it shifts the center of rotation within the Wigner-Weyl plane in parallel to the position axis. This, in turn, controls interaction of the field with the QMM. Note that the centers of rotation, corresponding to the values of $E_{n,k,s,\theta}$, densely fill the bounded interval $[-\lambda, \lambda]$ along the $q$-axis.

\begin{figure}[ht!]
\centering
\includegraphics[width=150mm]{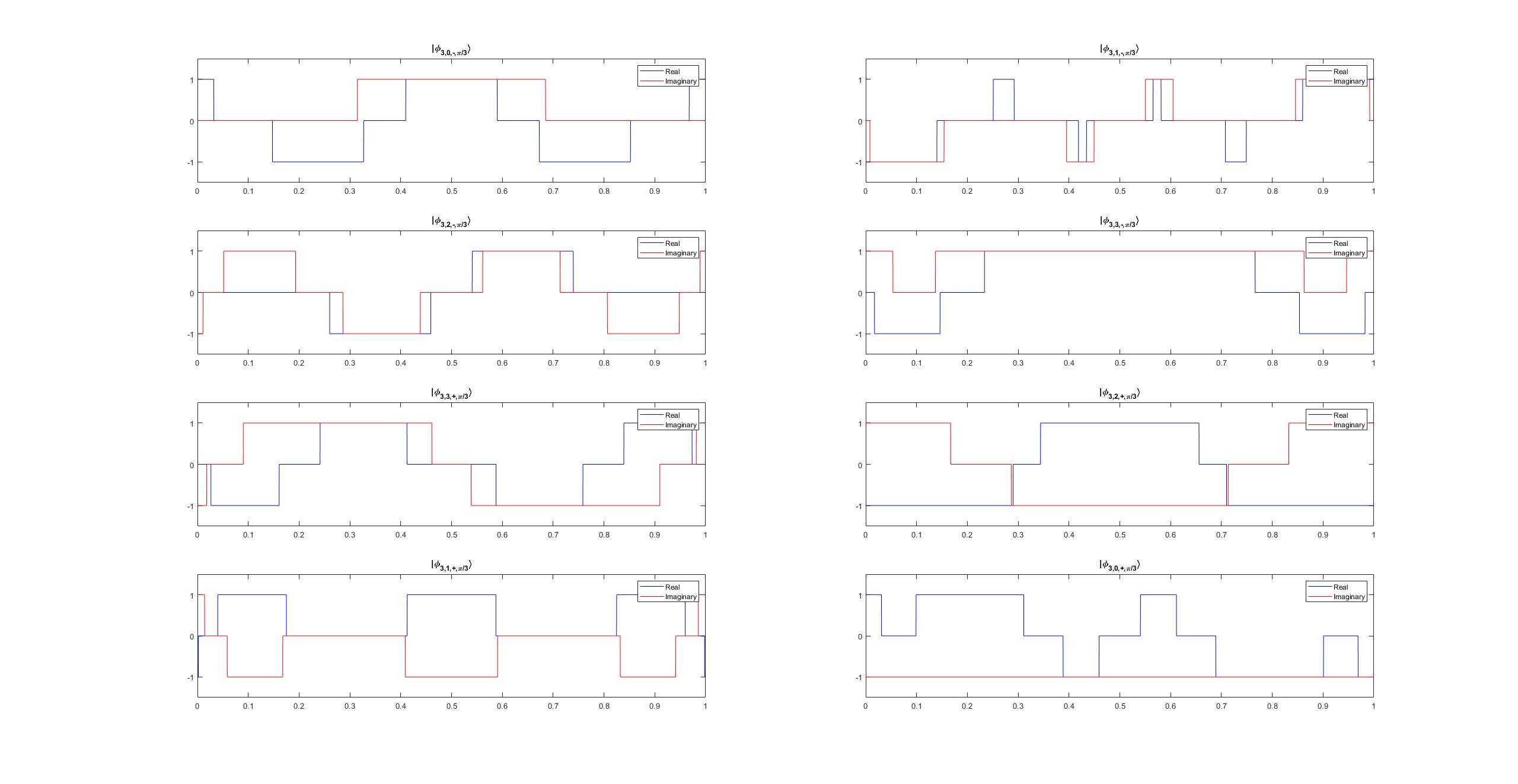}
\caption{Representation of the real and imaginary parts of the eigenstates of operator $C_{\theta}+sin(\theta)K$ in the Haar basis, here $\theta = \frac{\pi}{3}$ and $n = 3$.}
\label{Cthetapi3}
\end{figure}
\newpage
\section*{Summary} 
We have introduced a rigorous framework for analysis of finite or infinite qubit arrays. Our approach leads to representation of  observables by means of geometric operators. We have explored the properties of particular examples of such operators, including nonlocal operators that encode nontrivial action on an infinite array. We have demonstrated that for a special choice of parameters (dyadic scaling), the spectral properties of some essential operators of that type can be described explicitly. We have also applied these results to model a Quantum Metamaterial consisting of an array of qubits.     

Extensive connections have been established between the wavelet transform theory and the realm of quantum physics. In essence, wavelets can be perceived as generalized coherent states, a notion that is comprehensively discussed and reviewed in \cite{Ali}. More to the point, wavelets frequently emerge as the preferred methodology for conducting numerical analyses of quantum mechanical challenges. This preference is well-demonstrated by their prominent utilization in this context, as exemplified in \cite{Griebel, Flad}.

\section*{Acknowledgments}

The authors are immensely grateful to Professor Alexandre Zagoskin for inspiring conversations. 


\newpage

\newpage
\section{Appendix: heuristic arguments and numerical experiments} \label{section_appendix}

The meaning of statement (\ref{blocks_C}), proven in \cite{R1}, is that $C_x$ has a unitarily equivalent representation in the form of an infinite block matrix. Moreover, the unitary equivalence is given by the Haar transform. Therefore, at a first glance, it would seem to be a good working hypothesis that a similar statement should be true with regards to $C_y$. Also, since operators $\sigma_x$ and $\sigma_y$ are unitarily equivalent one might suspect that this equivalence would somehow transfer over to their periodizations. However, the actual situation turns out to be more complicated. In particular, the two ad hoc suppositions we have just mentioned are not true. 

In this section, we will outline some numerical experiments we have conducted that elucidate this point. The numerical representation of operators requires that they be truncated, by which we mean restricting them to the finite-dimensional subspace $V_n$. In other words, one considers the action of an operator only on the first $n$ qubits in the array. To avoid confusion, we will refer to the truncation of $C_y$ as $C_y^{(n)}$, i.e., $C_y^{(n)} = C_y|_{V_n}$.  We will extend this convention to other operators as well. In all figures we chose $n = 10$. The truncated operators can be represented numerically and investigated via computation up to machine precision. Here are some crucial numerical observations:
  
\begin{itemize}
  \item Representing $C_y^{(n)}$ in the Haar basis does not lead to a block matrix.    
  \item One can define a sequence of blocks in analogy to (\ref{block_D}) and construct an operator which has a block structure in the Haar basis\footnote{The blocks in question are described in Theorem \ref{P+theor} and Corollary \ref{cor_theta}.}. However, this operator will differ from the original $C_y^{(n)}$. The difference between these two operators is the ``remainder" matrix illustrated in Fig. \ref{compact1}.  
\end{itemize}
Note that the entries of the ``remainder" matrix are concentrated near the corner with index $(1,1)$ and diminish very rapidly farther away from it. This suggests that the ``remainder" matrix is a truncation of a compact operator (i.e., an operator that can be effectively approximated by a finite matrix). That is a step forward, but an additional spark of inspiration is required to characterize this operator explicitly and rigorously.

The path to a characterization of the ``remainder" is simplified if one initially focuses on $C_{-}^{(n)}$. Experimentation highlights the role of a matrix, denoted $L^{(n)}$, with entries
\begin{equation}\label{L_ij}
L_{i,j}^{(n)} =
\left\{
    \begin{array}{lr}
        1, & \text{if } i < j        \\
        0, & \text{if } i\geq j   \\
    \end{array}\right., \quad i, j = 1, 2, 3, \ldots 2^n.
\end{equation}
A numerical experiment shows that  
$C_-^{(n)} + L^{(n)}$ exhibits block structure when represented in the Haar basis. This is evidenced in Fig. \eqref{Ps}. It is now easy to develop the right conjecture as to the nature of the ``remainder"; indeed, it is an integral operator in $ L_2(0,1]$ with the kernel \eqref{ell_kernel}. 

Additionally, relation (\ref{EqPy}) determines the right ``remainder" to effect block structure of $C_y^{(n)}$ in the Haar basis, see Fig. \ref{Py}.


\begin{figure}[ht!]
\begin{center}
\includegraphics[width=100mm]{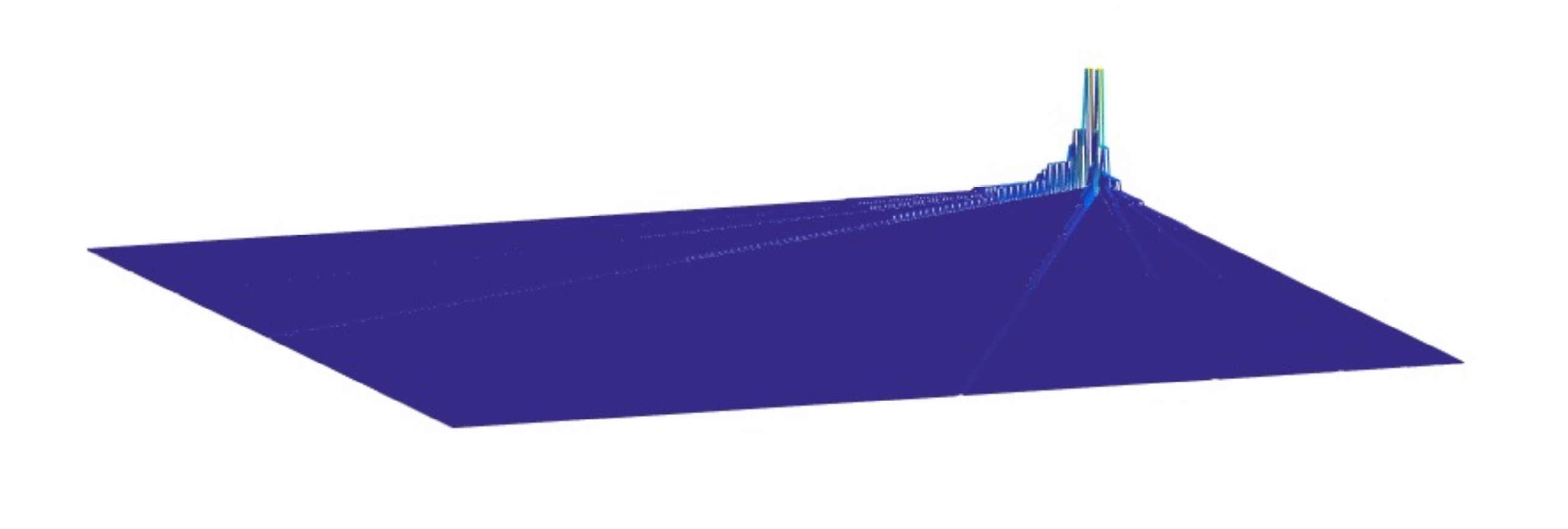}
\caption{The graph of the numerically identified ``remainder"  matrix. The matrix is obtained by subtracting from $C_y^{(n)}$ the part which reduces to blocks under the Haar transform; it is then represented in the Haar basis. The observed accumulation of the significant entries close to the position $(1,1)$ (the corner farthest away from the viewer) suggests that the matrix is a truncation of a compact operator. Theoretical analysis proves that this is indeed the case. However, this graph by itself offers no clue as to how to describe this operator rigorously.}
\label{compact1}
\end{center}
\end{figure}

\begin{figure}[ht!]
\begin{subfigure}{0.5\textwidth}
\includegraphics[width=1\linewidth, height=7cm]{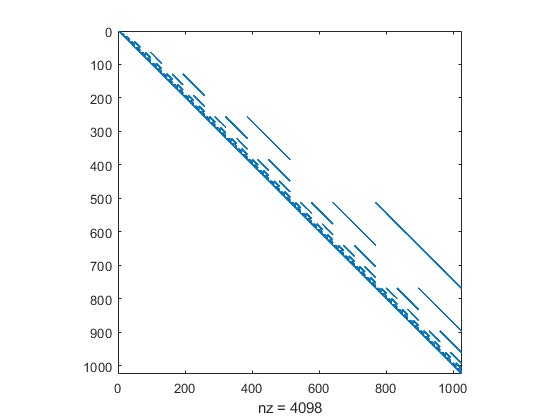}
\caption{non-zero elements of the operator $C_{-}^{(n)}+L^{(n)}$}
\label{P-}
\end{subfigure}
\begin{subfigure}{0.5\textwidth}
\includegraphics[width=1\linewidth, height=7cm]{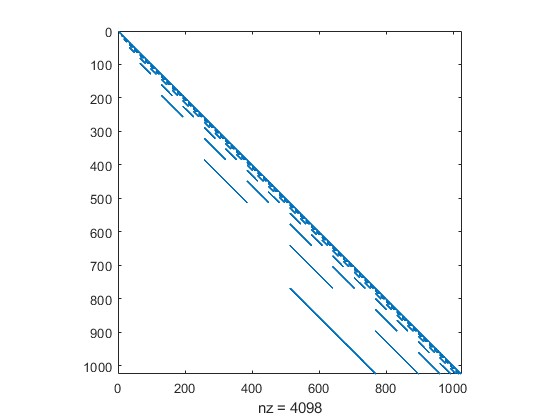} 
\caption{non-zero elements of the operator $C_{+}^{(n)}+(L^{(n)})'$}
\label{P+}
\end{subfigure}
\caption{Location of nonzero entries of truncated operators $C_{-}+L$  and $C_{+}+L'$ (where $L'$ is the adjoint of $L$) when they are represented in the Haar basis. Note the block structure of these matrices.}
\label{Ps}
\end{figure}


\begin{figure}[ht!]
\begin{center}
\includegraphics[width=150mm]{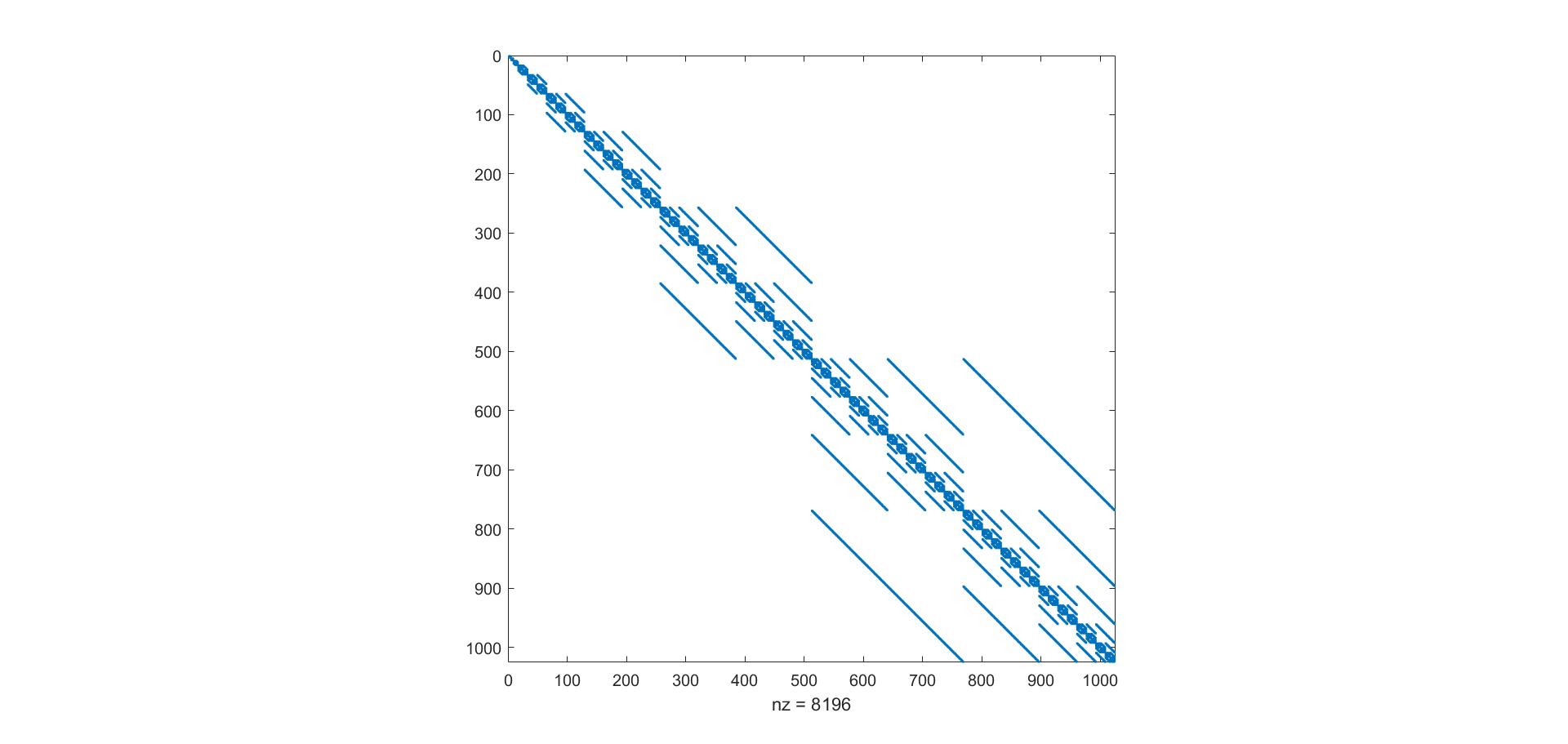}
\caption{Non-zero elements of the operator $C_y^{(n)} + i[(L^{(n)})'-L^{(n)}]$ when represented in the Haar basis. Note the block structure of this matrix. This points at $i[L'-L]$, see definition (\ref{ExplicitL}), as the  right hypothesis for the compact correction that effects block structure.} 
\label{Py}
\end{center}
\end{figure}


\begin{thebibliography}{[90]}

\bibitem{Ali} 
S.T. Ali, JP. Antoine, JP. Gazeau, Introduction. In: Coherent States, Wavelets, and Their Generalizations. Theoretical and Mathematical Physics; Springer: New York, NY, 2014

\bibitem{Anosov_etal} 
D.V. Anosov, A.V. Klimenko, G. Kolutsky, On the hyperbolic automorphisms of the 2-torus and their Markov partitions, \url{https://arxiv.org/abs/0810.5269}

\bibitem{Arimoto2} 
S. Arimoto, \emph{Note on the repeat space theory - its development and communications with Prof. Kenichi Fukui}, {\sc J. Math. Chem.} \textbf{34}, 253--257, (2003).

\bibitem{Arimoto4} 
S. Arimoto, M. Spivakovsky, M. Amini, E. Yoshida, M. Yokotani, T. Yamabe. \emph{Repeat space theory applied to carbon nanotubes and related molecular networks. III}, {\sc J. Math. Chem.} \textbf{50}, 2606--2622, (2012).

\bibitem{Baruch_1} 
E. Barouch, B. McCoy, and M. Dresden, {\sc Phys. Rev. A} \textbf{2}, 1075--1092, (1970).

\bibitem{Baruch_2} 
E. Barouch and B. McCoy, {\sc Phys. Rev. A}  \textbf{3}, 786--804, (1971).

\bibitem{Conway} 
J. Conway, {\em A course in functional analysis}. New York: Springer-Verlag.(1990)

\bibitem{Daubechies} I. Daubechies, Ten Lectures of Wavelets, CBMS-NSF regional conference series in applied mathematics; 61, Society for Industrial and Applied Mathematics, 1992 


\bibitem{Feynman} 
R. P. Feynman, \emph{Simulating Physics with Computers}, {\sc Int. J. of Theor. Phys.} \textbf{21}, Nos. 6/7 , 467--488, (1982).

\bibitem{Flad}
HJ. Flad, W. Hackbusch, H. Luo, D. Kolb, Wavelet-Based Multiscale Methods for Electronic Structure Calculations. In: Mielke A. (eds) Analysis, Modeling and Simulation of Multiscale Problems. Springer, Berlin, Heidelberg, (2006).

\bibitem{Fink} 
J. M. Fink, M. G\"{o}ppl, M. Baur, R. Bianchetti, P. J. Leek, A. Blais, \& A. Wallraff, \emph{Climbing the Jaynes-Cummings ladder and observing its $\sqrt{n}$ nonlinearity in a cavity QED system}, {\sc Nature } \textbf{454}(17), 315--318, (2008).

\bibitem{Gardiner-Zoller} 
C. Gardiner, P. Zoller, The Quantum world of Ultra-Cold Atoms and Light; Book I: Foundations of Quantum Optics, Imperial College Press: London (2014).

\bibitem{Gerry-Knight} 
C.C. Gerry and P. L. Knight, Introductory Quantum Optics; Cambridge University Press: Cambridge, (2005).

\bibitem{Gosson} 
M.  de Gosson. The Wigner Transform, World Scientific, (2017).

\bibitem{Griebel} 
M. Griebel, J. Hamackers, A wavelet based sparse grid methods for the electronic Schr\"{o}dinger equation. In: M. Sanz-Sol\'{e}, J. Soria, J. Varona, J. Verdera (eds) Proceedings of the Internaitonal Congress of Mathematicians, Madrid, 22--23 August, vol. III.   European Mathematical Society, Switzerland, 2006.

\bibitem{Haar} 
A. Haar, \emph{Zur theorie der orthogonalen Funktionensysteme}, {\sc Math. Annalen} \textbf{69}, 331--371, (1910).

\bibitem{Kato} 
T. Kato, {\em Perturbation Theory for Linear Operators}, Springer-Verlag Berlin Heidelberg, (1995).

\bibitem{Manin1} 
Yu. Manin, \emph{Computable and uncomputable} (in Russian). Moscow, Sovetskoye Radio, (1980).
%
\bibitem{Manin2}  
Yu. Manin, \emph{Classical computing, quantum computingm and Shor's factoring algorithm},  quant-ph/9903008

\bibitem{R18}
J. Conway, {\em A course in functional analysis}. New York: Springer-Verlag.(1990)

\bibitem{R19}
T. Kato, {\em Perturbation Theory for Linear Operators}, Springer-Verlag Berlin Heidelberg, (1995).

\bibitem{Rakhmanov} A.L. Rakhmanov, A.M. Zagoskin, S. Savel'ev, and F. Nori , {\sc Phys. Rev. B} \textbf{77}, 144507

\bibitem{R1}
A.P.~Sowa, A.M.~Zagoskin, {\em An exactly solvable quantum-metamaterial type model}, Journal of Physics A: Mathematical and Theoretical.,
{\em 52}, 395304 (12pp), (2019).

\bibitem{Shore_Knight} 
B.W. Shore and P.L. Knight,  The Jaynes-Cummings model, {\sc Journal of Modern Optics}, Vol. 40, no. 7,
1195--1238, (1993).

\bibitem{Taylor} 
P. Zizler, A. A. Zuidwijk, K. F. Taylor, and S. Arimoto, A Finer Aspect of Eigenvalue Distribution of Selfadjoint Band Toeplitz Matrices, {\sc SIAM J. Matrix Anal. \& Appl.} \textbf{24}, 59--67, (2002).

\bibitem{Wojtaszczyk} 
P. Wojtaszczyk, A Mathematical Introduction to Wavelets, London Mathematical Society, Student Texts 37,  Cambridge University Press: Cambridge (1997).

\bibitem{Zagoskin} 
A. M. Zagoskin, Quantum Engineering, Theory and Design of Quantum Coherent Structures, Cambridge University Press: Cambridge (2011)

\bibitem{Zagoskin2} 
A.M. Zagoskin, D. Felbacq, and E. Rousseau, \emph{ Quantum metamaterials in the microwave and optical ranges}, {\sc EPJ Quantum Technology} \textbf{3}(1):2 (2016)

\bibitem{Zagoskin3} 
A.M. Zagoskin, \emph{ Superconducting quantum metamaterials}, in: Nonlinear, Tunable and Active Metamaterials, 255--279, Springer (2015)




\end{thebibliography}
\end{document}